%
%
%
%
%
\RequirePackage{fix-cm}
\documentclass[a4paper,12pt]{svjour3}                     
\usepackage[margin=1in]{geometry}




\usepackage{multirow}

\usepackage{amsmath}
\usepackage{amssymb}
\usepackage{bm}
\usepackage{mathrsfs}
\usepackage[dvips]{graphicx}

\usepackage{algorithmic}
\usepackage{algorithm}

\usepackage{color}

\usepackage{url}

\usepackage{supertabular}





\newtheorem{thm}{Theorem}












\DeclareMathAlphabet{\mathsfsl}{OT1}{cmss}{m}{sl}




\renewcommand{\phi}{\varphi}







\newcommand{\Expect}{\operatorname{\mathbb{E}}}















\newcommand{\bx}{\mathbf{x}}

\newcommand{\ba}{\mathbf{a}}
\newcommand{\bb}{\mathbf{b}}
\newcommand{\by}{\mathbf{y}}

\newcommand{\bz}{\mathbf{z}}

\newcommand{\sA}{\mathcal{A}}

\newcommand{\bbC}{\mathcal{C}}

\def\reals{\mathbb{R}}

\def\calW{\mathcal{W}}

\def\calI{\mathcal{I}}

\def\bx{\mathbf{x}}

\def\bu{\mathbf{u}}
\def\bS{\mathbf{S}}
\def\b0{\mathbf{0}}

\def\bv{\mathbf{v}}
\def\bA{\mathbf{A}}

\def\calS{\mathcal{S}}
\def\bI{\mathbf{I}}
\def\sI{\mathcal{I}}

\def\Sp{\mathrm{Sp}}

\newcommand{\di}{{\,\mathrm{d}}}

\smartqed  
\usepackage{graphicx}
%
%
%
%
%
\begin{document}

\title{Phase retrieval of complex-valued objects via a randomized Kaczmarz method}



\author{Teng Zhang and Feng Yu
}


\institute{T. Zhang \at
              University of Central Florida, Department of Mathematics \\
              \email{teng.zhang@ucf.edu}             \\
              F. Yu \at
              University of Central Florida, Department of Mathematics \\
              \email{yfeng@Knights.ucf.edu}          
}

\date{Received: date / Accepted: date}

\maketitle

\begin{abstract}
  This paper investigates the convergence of the randomized Kaczmarz algorithm for the problem of phase retrieval of complex-valued objects. While this algorithm  has been studied for the real-valued case in \cite{10.1093/imaiai/iay005}, its generalization to the complex-valued case is nontrivial and has been left as a conjecture. This paper applies a different approach by establishing the connection between the convergence of the algorithm and the convexity of an objective function. Based on the connection, it demonstrates that when the sensing vectors are sampled uniformly from a unit sphere in $\bbC^n$ and the number of sensing vectors $m$ satisfies $m>O(n\log n)$ as $n, m\rightarrow\infty$, then this algorithm with a good initialization achieves linear convergence to the solution with high probability. The method can be applied to other statistical models of sensing vectors as well. A similar convergence result is established for the unitary model, where the sensing vectors are from the columns of random orthogonal matrices.
\end{abstract}
\section{Introduction}
%
%
%
%
This article concerns the phase retrieval problem as follows: let $\bz\in\bbC^n$ be an unknown vector,  given $m$ known sensing vectors $\{\ba_i\}_{i=1}^m\in\bbC^n$ and the observations
\begin{equation}\label{eq:problem}
y_i=|\ba_i^*\bz|, i=1,2,\cdots,m,
\end{equation}
then can we reconstruct $\bz$ from the observations $\{y_i\}_{i=1}^m$? 

Many algorithm has been proposed for this problem, including approaches based on convex relaxation to semidefinite optimization~\cite{Chai2011,Candes_PhaseLift,Waldspurger2015,Candes2014,Gross2015}, convex relaxation to linear program~\cite{pmlr-v54-bahmani17a,Goldstein2016,Hand2016,Hand20162,NIPS2018_8082}, nonconvex approaches based on Wirtinger flows, i.e., gradient flow in the complex setting~\cite{Candes7029630,NIPS2015_5743,Zhang:2016:PNP:3045390.3045499,NIPS2016_6319,cai2016,NIPS2016_6061,Soltanolkotabi2017,Chen2018,Candes7029630,Soltanolkotabi2017,7541725},  alternate minimization (Gerchberg-Saxton) algorithm and its variants~\cite{Gerchberg72,Fienup78,Fienup82,Bauschke03,Waldspurger2016,Netrapalli7130654,Zhang2017,zhang2020},  and algorithms based on Douglas-Rachford splitting~\cite{doi:10.1137/18M1170364}. This work investigates the randomized Kaczmarz, which is simple to implement and has shown competitive performance in simulations. This algorithm is first proposed by Wei in \cite{Wei_2015} and it is shown that the method performs comparably with the state-of-the-art Wirtinger flow methods,  when the
sensing vectors are from real or complex Gaussian distributions, or when they follow the unitary model or the coded diffraction pattern (CDP) model. The work also includes a preliminary convergence analysis. The convergence analysis is improved by Tan and Vershynin in \cite{10.1093/imaiai/iay005}, which shows that the algorithm is successful when  there are as many Gaussian measurements as the dimension, up to a constant factor, and the initialization is in a ``basin of linear convergence''. However, their results only apply when the signal $\bz$ and the measurement vectors $\{\ba_i\}_{i=1}^m$ are real-valued. As discussed in~\cite[Section 7.2]{10.1093/imaiai/iay005}, there is no  straightforward generalization of their technique from the real-valued case to the complex-valued case. In a related work~\cite{tan2019online}, Tan and Vershynin show that constant step
size online stochastic gradient descent (SGD) converges from arbitrary initializations for a
non-smooth, non-convex amplitude squared loss objective, and this online SGD is strongly reminiscent to the randomized Kaczmarz algorithm from numerical analysis. However, the analysis in \cite{tan2019online} is still based on the real-valued setting.

\subsection{Randomized Kaczmarz algorithm for solving linear systems}
The Kaczmarz method \cite{kaczmarz1379} is an iterative algorithm for solving a system of linear equations
\begin{equation}\label{eq:linearsystem}
\ba_i^*\bx=b_i, i=1,\cdots,m.
\end{equation}
In the $k$-th iteration, a linear equation (out of $m$ equations) is selected and the new estimate $\bx^{(k+1)}$ is obtained by projecting the current estimate $\bx^{(k)}$ to the hyperplane corresponding to the solution set of the linear equation. The deterministic version of the Kaczmarz method usually selects the linear equation in a cyclic manner, and the randomized Kaczmarz method selects the linear equation randomly. When the randomized
Kaczmarz method randomly picks up a system with the probability proportional to $1/\|\ba_i\|^2$, the randomized Kaczmarz method has been shown to converge linearly in \cite{Strohmer2008} with a rate of $1-\kappa(\bA)$, where $\kappa(\bA)$ is the condition number of $\bA=[\ba_1,\cdots,\ba_m]\in\reals^{m\times n}$.  For additional analysis on this method and its variants, we refer the readers to~\cite{NEEDELL2014199,Needell2016}.

\subsection{Randomized Kaczmarz algorithm for phase retrieval}
The randomized Kaczmarz algorithm can be generalized to the  phase retrieval problem \eqref{eq:problem} naturally. While the solution of each equation is not a hyperplane anymore, the projection to the solution set still has an explicit formula as follows. Let $\sA_i=\{\bz: y_i=|\ba_i^*\bz|\}$, then
\begin{equation}\label{eq:projection}
P_{\sA_i}(\bx)=\bx-\left(1-\frac{y_i}{|\ba_i^*\bx|}\right)	\bx^*\frac{\ba_i\ba_i^*}{\|\ba_i\|^2}.
\end{equation}
A randomized Kaczmarz update projects the estimate to the nearest point in $\sA_{r(k)}$ at the $k$-th iteration, where $r(k)$ is randomly chosen from $\{1, \cdots, m\}$, and the algorithm can be written as
\begin{equation}\label{eq:Kaczmarz}
\bx^{(k+1)}=P_{\sA_{r(k)}}(\bx^{(k)}).
\end{equation}

\subsection{Contribution and Main Result}

The main contribution of this paper is a guarantee on the linear convergence of randomized Kaczmarz algorithm as follows:
\begin{itemize}
\item First, this paper establishes a deterministic condition such that the algorithm converges linearly with high probability. Intuitively, the condition requires that an objective function is strongly convex in a neighborhood around the true signal $\bz$.
\item  Second, this paper proves that when the sensing vectors are sampled uniformly from a unit sphere in $\bbC^n$, and the number of sensing vectors $m$ satisfies $m>O(n\log n)$ as $m,n\rightarrow\infty$, the deterministic condition is satisfied with high probability. A similar result is also obtained for the unitary model,  where the sensing vectors are from the columns of random orthogonal matrices.
\end{itemize}
This paper generalizes the result in Tan and Vershynin in \cite{10.1093/imaiai/iay005} from the real-valued case to the complex-valued case. The generalization is not straightforward and the approach to obtain the deterministic condition is very different in this work. In comparison, since the phases can only be either $1$ or $-1$ in the real-valued case, \cite{10.1093/imaiai/iay005} divides all sensing vectors into ``good measurements'' with correct phases and ``bad measurements'' with possibly incorrect phases, and control the total influence of bad measurements. However, as remarked in \cite[Section 7.2]{10.1093/imaiai/iay005}, this method would not work in the complex-valued case since the phases are no longer $\pm1$ and each measurement contributes an error that scales with the phase difference, and we can no longer simply sum up the influence of bad measurements as in \cite[Lemma 2.1]{10.1093/imaiai/iay005}.

We remark that if $\ba_i$ are scaled such that $\|\ba_i\|=1$ for all $1\leq i\leq n$, then the update formula \eqref{eq:Kaczmarz} can also be considered as the stochastic gradient descent algorithm that minimizes $\frac{1}{m}\sum_{i=1}^m\Big(|\ba_i^*\bx|-y_i \Big)^2$, with step size chosen to be $1$. In this sense, our work is related to \cite{bassily2018exponential}, which the convergence of the stochastic gradient descent algorithm for generic objective functions has been studied, and both their work and this work are based on the convexity of the objective function. However, we remark that their result can not be directly applied here since it assumes a specific step size that depends on the smoothness constant and the Polyak-Lojasiewicz condition of the objective function, which is unclear for this objective function.

\subsection{Notation}
Throughout the paper, $C$ and $c$ are absolute constants that do not depend on $m, n$, and can change from line to line. We also implicitly assume that $m,n$ are sufficiently large, for example, we write $m>n+10$ when $m>n\log n$ is assumed. $\mathrm{Re}(x)$ represents the real component of a complex number $x$. For a set $\calS$, $|\calS|$ represents the cardinality of the set.

\section{Main Result}
We first present the main contribution of this paper, as well as a sketch of the proof and some discussion.
\begin{thm}\label{thm:main}
(a) Assuming that the sensing vectors $\{\ba_i\}_{i=1}^m$ are i.i.d. sampled from the uniform distribution on the unit sphere in $\bbC^n$, and in each iteration, the randomized Kaczmarz algorithm randomly picks up each equation with probability $1/m$. Then there exist  absolute constants $C_0, c_0, L$ that does not depend on $m,n$ such that if $m \geq C_0n\log n$ as $m,n\rightarrow\infty$, then for all $\|\bx^{(0)}-\bz\|\leq c_0\sqrt{\delta_1}\|\bz\|$ and $\epsilon>0$, we have \begin{equation}\label{eq:mainn}Pr\left(\|\bx^{(k)}-\bz\|^2\leq \epsilon \|\bx^{(0)}-\bz\|^2\right) \geq 1-\delta_1-\frac{\left(1-\frac{L}{n}\right)^k}{\epsilon\left(1-\delta_1\right)}-C\exp(-cn).  \end{equation}
(b) For the unitary model that $m=Kn$ for some integer $K$, and for any $1\leq k\leq K$, $[\ba_{(k-1)n+1},\cdots,\ba_{kn}]\in\bbC^{n\times n}$ is a random orthogonal matrix in $\bbC^{n\times n}$, there exists some constants $C_0, c_0, L$ such that if $m \geq C_0n\log n$ and $\sqrt{n}>\log^2 m$, then \eqref{eq:mainn} also holds as $m,n\rightarrow\infty$.
\end{thm}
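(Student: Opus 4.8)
The plan is to use the observation, already noted above, that when $\|\ba_i\|=1$ the Kaczmarz step \eqref{eq:Kaczmarz} is exactly a unit--step stochastic gradient step $\bx^{(k+1)}=\bx^{(k)}-\nabla g_{r(k)}(\bx^{(k)})$ for the summands $g_i(\bx)=\tfrac12\big(|\ba_i^*\bx|-y_i\big)^2$, whose Wirtinger gradient is $\nabla g_i(\bx)=\ba_i\,\phase(\ba_i^*\bx)\,\big(|\ba_i^*\bx|-|\ba_i^*\bz|\big)$. Since the whole orbit $\{e^{\iu\phi}\bz:\phi\in\reals\}$ consists of fixed points of the iteration, I track $\dist(\bx^{(k)},\{e^{\iu\phi}\bz\})$ --- which is the quantity in \eqref{eq:mainn} up to the unrecoverable global phase --- via the aligned representative $\tilde\bz=e^{\iu\phi}\bz$ closest to the current iterate. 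Expanding $\|\bx^{(k+1)}-\tilde\bz\|^2$, taking expectation over $r(k)$ with $\bx^{(k)}=\bx$ fixed, and using $\|\ba_i\|=1$, $y_i=|\ba_i^*\tilde\bz|$ and $\|\nabla g_i(\bx)\|^2=\big(|\ba_i^*\bx|-|\ba_i^*\tilde\bz|\big)^2$, a short computation collapses to the clean identity
\begin{equation*}
\mathbb{E}_{r(k)}\|\bx^{(k+1)}-\tilde\bz\|^2=\|\bx-\tilde\bz\|^2-\frac1m\sum_{i=1}^m\big|\ba_i^*(\bx-\tilde\bz)\big|^2+\frac2m\sum_{i=1}^m(1-\cos\theta_i)\,\big|\ba_i^*\tilde\bz\big|^2,
\end{equation*}
where $\theta_i$ is the angle between $\ba_i^*\bx$ and $\ba_i^*\tilde\bz$. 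The ``deterministic condition'' I aim for is that, for all $\bx$ with $\|\bx-\tilde\bz\|\le c_0\sqrt{\delta_1}\|\bz\|$, the right side is $\le(1-L/n)\|\bx-\tilde\bz\|^2$ --- equivalently, that the restricted strong convexity of $f=\tfrac1m\sum_i g_i$ near $\bz$ strictly dominates the stochastic gradient noise, which is the neighborhood convexity promised in the introduction.

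Given this condition (valid for a fixed realization of $\{\ba_i\}$ outside an event of probability $C\exp(-cn)$), \eqref{eq:mainn} follows by a stopping-time argument: with $\tau$ the first exit time from the ball of radius $c_0\sqrt{\delta_1}\|\bz\|$, the stopped process $(1-L/n)^{-(k\wedge\tau)}\|\bx^{(k\wedge\tau)}-\tilde\bz\|^2$ is a nonnegative supermartingale, so the maximal inequality for such processes bounds the probability of ever leaving the ball by $\delta_1$ (this fixes $c_0$), and on the complement $\mathbb{E}\,\|\bx^{(k)}-\tilde\bz\|^2\le(1-L/n)^k\|\bx^{(0)}-\bz\|^2/(1-\delta_1)$, so Markov's inequality supplies the term $(1-L/n)^k/(\epsilon(1-\delta_1))$; a union bound over the three bad events yields \eqref{eq:mainn}.

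The core of part (a) is the deterministic condition for i.i.d.\ uniform unit vectors. Writing $\bw=\bx-\tilde\bz$, one has $\tfrac1m\sum_i|\ba_i^*\bw|^2\ge\lambda_{\min}\big(\tfrac1m\sum_i\ba_i\ba_i^*\big)\|\bw\|^2$, and since $\mathbb{E}\,\ba_i\ba_i^*=\tfrac1n\bI$, matrix Bernstein gives $\lambda_{\min}\ge\tfrac1n(1-o(1))$ once $m\ge C_0 n\log n$ --- the logarithm is needed to make the $o(1)$ vanish, which matters because the contraction margin is only of order $1/n$. For the last sum I split the indices. For the ``good'' $i$ with $|\ba_i^*\bw|\le c_0\sqrt{\delta_1}\,|\ba_i^*\tilde\bz|$ the angle $\theta_i$ is small and a second-order expansion gives $2(1-\cos\theta_i)|\ba_i^*\tilde\bz|^2\le\big(1+O(c_0\sqrt{\delta_1})\big)\big|\imag\!\big(\overline{\phase(\ba_i^*\tilde\bz)}\,\ba_i^*\bw\big)\big|^2$; here the fact that $\tilde\bz$ is the aligned representative is crucial, because it forces $\bw$ to be orthogonal (in the real sense) to $\iu\tilde\bz$, so writing $\bw=\rho\,\tilde\bz/\|\tilde\bz\|+\bw_\perp$ with $\rho\in\reals$, the first part drops out of the imaginary part and only $\imag\big(\overline{\phase(\ba_i^*\tilde\bz)}\,\ba_i^*\bw_\perp\big)$ remains, whose square averages --- using that $\ba_i^*\bw_\perp$ and $\phase(\ba_i^*\tilde\bz)$ are essentially independent --- to $\tfrac1{2n}\|\bw_\perp\|^2(1+o(1))\le\tfrac1{2n}\|\bw\|^2(1+o(1))$. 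The ``bad'' $i$ (where either amplitude is atypically small) form a set so rare that a truncated second-moment estimate makes their total contribution $o(\tfrac1n\|\bw\|^2)$. Hence the last sum is at most $(\tfrac12+o(1))\tfrac1n\|\bw\|^2$, strictly below $\lambda_{\min}\|\bw\|^2$, so the condition holds with $L=\tfrac12-o(1)$; uniformity over the ball is obtained by a net argument plus Lipschitz bounds on all the quantities, absorbed into the $C\exp(-cn)$ slack. I expect the uniform control of the bad-index contribution --- showing the rare, badly phase-misaligned measurements carry negligible total weight for \emph{every} $\bx$ in the ball --- to be the main obstacle, since this is exactly the place where the all-or-nothing real-valued argument of \cite[Lemma 2.1]{10.1093/imaiai/iay005} fails and must be replaced by the quantitative $(1-\cos\theta_i)$ bookkeeping.

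For part (b) the second-moment matrix is exact, $\tfrac1m\sum_i\ba_i\ba_i^*=\tfrac1n\bI$, because each block of $n$ columns is an orthonormal basis, so no matrix concentration is needed and $\lambda_{\min}=\lambda_{\max}=\tfrac1n$. What must be redone is the phase-sum estimate, which for i.i.d.\ vectors relied on Gaussian-type behavior of $\ba_i^*\bw$ and $\ba_i^*\tilde\bz$: the columns within a block are dependent, so I couple the $n$ columns of each random orthogonal matrix to $n$ i.i.d.\ complex Gaussian vectors, the coupling error being controlled precisely by the hypothesis $\sqrt n>\log^2 m$ --- the regime in which a block of a random orthogonal matrix is indistinguishable from Gaussians to the needed accuracy --- after which the good/bad split and the truncated-moment bound go through as in part (a). Feeding the resulting deterministic condition into the same supermartingale argument gives \eqref{eq:mainn} for the unitary model.
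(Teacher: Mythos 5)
Your one-step identity is correct and is in fact an exact reformulation of the paper's computation $\Expect\|P\bx-\bz\|^2=f(\bx)+f'_{\bz-\bx}(\bx)+\|\bz-\bx\|^2$, and your supermartingale/stopping-time step and the Gaussian coupling for the unitary model mirror the paper's Sections 3 and 6. The genuine gap is exactly the piece you defer: the uniform control of $\frac{2}{m}\sum_{i\in\mathrm{bad}}(1-\cos\theta_i)|\ba_i^*\tilde\bz|^2$ over the whole ball. As written, your good/bad threshold $|\ba_i^*\bw|\le c_0\sqrt{\delta_1}\,|\ba_i^*\tilde\bz|$ uses the ball radius itself, so when $\|\bw\|$ is comparable to $c_0\sqrt{\delta_1}$ the bad set is $\calS(\bw/\|\bw\|,\beta)$ with $\beta$ of order $1$, which by the paper's Lemma~\ref{lemma:term34} typically contains a constant fraction (about half) of all indices; a crude bound $2(1-\cos\theta_i)|\ba_i^*\tilde\bz|^2\le 4|\ba_i^*\tilde\bz|^2\le 4|\ba_i^*\bw|^2/(c_0^2\delta_1)$ then yields a contribution of order $\|\bw\|^2/n$ with a large constant, not $o(\|\bw\|^2/n)$, so the claimed margin over $\lambda_{\min}\approx 1/n$ disappears near the boundary of the ball. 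The paper avoids this by a two-parameter trade-off: the wedge is $\calS(\bv,c_0\alpha)$ in the \emph{normalized} direction $\bv$, the bad contribution is bounded by $(2+4\alpha)\sum_{i\in\calS(\bv,c_0\alpha)}|\ba_i^*\bv|^2$ (Theorem~\ref{thm:main2}, via integrating directional derivatives rather than a pointwise cosine expansion), and then one chooses $\alpha$ large and $c_0\sim\alpha^{-2}$ so that $\alpha\cdot(c_0\alpha)$ is small; your single threshold has no analogue of this slack.

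The second missing ingredient is the uniformity itself. The wedge indicator $i\in\calS(\bv,\beta)$ is not Lipschitz in $\bv$ (it flips when $|\ba_i^*\bv|$ is tiny), so a net argument with ``Lipschitz bounds absorbed into the $C\exp(-cn)$ slack'' does not go through directly; the paper needs Lemma~\ref{lemma:term31} (excluding indices with $\|\ba_i\|/|\ba_i^*\bv|$ large before perturbing $\bv$), Hoeffding bounds on $|\calS(\bv,\beta)|$, and then the chaining bound of Lemma~\ref{lemma:term32} (Tan--Vershynin's Theorem 5.7) to get $\max_{\|\bv\|=1}\sum_{i\in\calS(\bv,\beta/2)}|\ba_i^*\bv|^2\lesssim\beta m/n$. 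This is the technical core of the complex-valued case, and your proposal acknowledges rather than supplies it. Two smaller points: the theorem bounds $\|\bx^{(k)}-\bz\|$ to the fixed representative $\bz$, whereas your analysis tracks distance to the orbit $\{e^{\iu\phi}\bz\}$, so you would need to reconcile the two; and the $\log n$ in $m\ge C_0 n\log n$ is forced by the union bound over $\epsilon$-nets with $\epsilon\sim 1/n$ (Lemmas~\ref{lemma:term1} and~\ref{lemma:term3}), not by the matrix concentration for $\lambda_{\min}$, which only needs $m\gtrsim n$.
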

This theorem shows the linear convergence of as follows: if $\delta_1<\frac{1}{2}$, and $k\geq \log(2\epsilon\delta_2)/\log(1-L/n)$, then with probability at least $1-\delta_1-\delta_2-C\exp(-cn)$, we have $\|\bx^{(k)}-\bz\|^2\leq \epsilon \|\bx^{(0)}-\bz\|^2$. If we let $\delta_1=\delta_2=\delta/2$, the number of iterations to achieve accuracy $\epsilon$ with probability $1-\delta$ is in the order of $O\left(n\log\frac{1}{\epsilon\delta}\right)$.

The proof of Theorem~\ref{thm:main} is divided into three steps. The first step establishes a condition in \eqref{eq:assumption}, with which the algorithm converges with high probability. This condition describes the regularity of an objective function in a local neighborhood around $\bz$. The second step establishes an explicit lower bound of  the key parameter $L$ in the condition \eqref{eq:assumption}. In the third step, we apply tools from random matrix theory and measure concentration to analyze the explicit formula of $L$ and show that it can be chosen as a constant as $n,m\rightarrow\infty$. The three steps  for part (a) are described in Sections~\ref{sec:step1}, ~\ref{sec:step2}, and ~\ref{sec:step3} respectively, and the third step for part (b) is described in Section~\ref{sec:step4}. The main results of these sections are summarized in Theorems~\ref{thm:main1} and ~\ref{thm:main2}, ~\ref{thm:main3}, and ~\ref{thm:step4}. Combining these theorems, we have the proof of Theorem~\ref{thm:main}.
\begin{proof}[Proof of Theorem~\ref{thm:main}]
WLOG we assume $\|\bz\|=1$ for the rest of the paper.  In the statement of Theorem~\ref{thm:main3}, $\alpha$ in \eqref{eq:main3} can be chosen such that $\alpha>1$ and $\frac{6.6}{\alpha-1}<\frac{c_1}{72}$ and $c_0$ in \eqref{eq:main3} can be chosen such that $(2+4\alpha)C4\sqrt{2}c_0\alpha<\frac{c_1}{72}$. Since when $\alpha$ is large, $c_0$ can be chosen in  the order of $1/\alpha^2$ and $c_0\alpha$ is in the order of $1/\alpha$, there exists a choice of $(\alpha,c_0)$ such that the assumption $2c_0\alpha<1$ in Theorem~\ref{thm:main3} holds. Then Theorem~\ref{thm:main2} and Theorem~\ref{thm:main3} imply that $L=\frac{c_1}{72}$ satisfies the assumption \eqref{eq:assumption}. With this assumption satisfied, Theorem~\ref{thm:main1} implies Theorem~\ref{thm:main}(a). The proof of Theorem~\ref{thm:main}(b) is similar to the proof of (a), with Theorem~\ref{thm:main3} replaced by Theorem~\ref{thm:step4}.
\end{proof}

\section{Step 1: Convergence under a deterministic condition}\label{sec:step1}
This section connects the convergence of the randomized Kaczmarz algorithm with the function
\begin{equation}\label{eq:obj}
f(\bx)=\frac{1}{m}\sum_{i=1}^m\Big(|\ba_i^*\bx|-y_i \Big)^2
\end{equation}and its directional derivatives defined by
\[
f'_{\bv}(\bx)=\lim_{t\rightarrow 0^+}\frac{f(\bx+t\bv)-f(\bv)}{t}= \frac{1}{m}\sum_{i=1}^m\Big(1-\frac{y_i}{|\ba_i^*\bx|}\Big)\big(\ba_i^*\bv\bx^*\ba_i+\ba_i^*\bx\bv^*\ba_i\big).
\]
The result of this section depends on the following local regularity assumption on $f$:
 \begin{equation}\label{eq:assumption}
 f(\bx)+f'_{\bz-\bx}(\bx)+\frac{L}{n}\|\bz-\bx\|^2 \leq f(\bz),\,\,\text{for all $\bx$ such that $\|\bx-\bz\|\leq c_0$.}
\end{equation}
We remark that this formulation is identical to the definition of strong convexity, so this assumption is related to the strong convexity of $f(\bx)$ in the local neighborhood $\bx\in B(\bz,c_0)$. However, it is slightly less restrictive in the sense that it only requires \eqref{eq:assumption} for a fixed $\bz$ (if $\bz$ is replaced by any $\by\in B(\bz,c_0)$, then this is equivalent to strong convexity).

We also remark that this objective function \eqref{eq:obj} has been studied in \cite{NIPS2016_6319,8049465,tan2019online} and its local regularity property has been studied in \cite{NIPS2016_6319,8049465}. However, these works study a different regularity assumption in~\cite[(12)]{NIPS2016_6319} and~\cite[(39)]{8049465}, which can be written as:
\[
  f'_{\bx-\bz}(\bx)\geq \frac{\mu}{2}\|f'(\bx)\|^2+\frac{\lambda}{2}\|\bx-\bz\|^2.
\]
In addition, similar to \cite{10.1093/imaiai/iay005}, these works only theoretically analyze the real-valued setting.

The main result of this section is summarized as follows. It states that under the assumption \eqref{eq:assumption}, the algorithm converges linearly with high probability.
\begin{thm}\label{thm:main1}
Assume that $\|\bz\|=1$ and $\|\ba_i\|=1$ for all $1\leq i\leq m$, \eqref{eq:assumption} holds, the randomized Kaczmarz algorithm randomly picks up an equation with the same probability $1/m$, and the algorithm is initialized such that $\|\bx^{(0)}-\bz\|\leq c_0\sqrt{\delta_1}$ for some $0\leq \delta_1\leq 1$. Then for any $\epsilon>0$,
\begin{equation}\label{eq:main1}
\Pr(\|\bx^{(k)}-\bz\|^2\leq \epsilon \|\bx^{(0)}-\bz\|^2) \geq 1-\delta_1-\frac{\left(1-\frac{L}{n}\right)^k}{\epsilon\left(1-\delta_1\right)}.
\end{equation}
\end{thm}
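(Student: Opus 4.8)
The plan is to convert the Kaczmarz step \eqref{eq:Kaczmarz} into a one-step contraction in conditional expectation — this is where the local regularity hypothesis \eqref{eq:assumption} enters — and then to promote it to the high-probability bound \eqref{eq:main1} by a stopping-time argument that copes with the fact that \eqref{eq:assumption} is only assumed on $B(\bz,c_0)$. Conceptually this is the standard analysis of stochastic gradient descent in the ``interpolation'' regime specialized to our setting, but the first step rests on a computation particular to the complex-valued problem.

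\textbf{Step 1 (an exact one-step identity).} Fix the history $\mathcal F_k=\sigma(r(0),\dots,r(k-1))$ and write $\ba=\ba_{r(k)}$, $y=y_{r(k)}$, $u=\ba^*\bx^{(k)}$. From \eqref{eq:projection} with $\|\ba\|=1$ we have $\bx^{(k+1)}=\bx^{(k)}-\bigl(1-\tfrac{y}{|u|}\bigr)(\ba^*\bx^{(k)})\ba$, so subtracting $\bz$ and splitting into the components orthogonal to and along $\ba$,
\[
\bx^{(k+1)}-\bz=(\bI-\ba\ba^*)(\bx^{(k)}-\bz)+\Bigl(\tfrac{y}{|u|}u-\ba^*\bz\Bigr)\ba .
\]
These two pieces are orthogonal, so taking squared norms and using $\|(\bI-\ba\ba^*)(\bx^{(k)}-\bz)\|^2=\|\bx^{(k)}-\bz\|^2-|u-\ba^*\bz|^2$, then expanding the scalar residual $|\tfrac{y}{|u|}u-\ba^*\bz|^2-|u-\ba^*\bz|^2$ in terms of the phases of $u$ and $\ba^*\bz$ (recall $|\ba^*\bz|=y$), one arrives at the exact identity
\[
\|\bx^{(k+1)}-\bz\|^2=\|\bx^{(k)}-\bz\|^2+f_{r(k)}(\bx^{(k)})+(f_{r(k)})'_{\bz-\bx^{(k)}}(\bx^{(k)})-f_{r(k)}(\bz),
\]
where $f_i(\bx)=(|\ba_i^*\bx|-y_i)^2$ so that $f=\tfrac1m\sum_{i=1}^m f_i$ (cf.\ \eqref{eq:obj}). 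Averaging over $r(k)$, which is uniform on $\{1,\dots,m\}$ and independent of $\mathcal F_k$, and using linearity in $i$ of $f_i$ and of the directional derivative,
\[
\Expect\bigl[\|\bx^{(k+1)}-\bz\|^2\,\big|\,\mathcal F_k\bigr]=\|\bx^{(k)}-\bz\|^2+f(\bx^{(k)})+f'_{\bz-\bx^{(k)}}(\bx^{(k)})-f(\bz).
\]
The reason this is an equality rather than merely a one-sided estimate is that $f_i(\bz)=0$ for every $i$ (the measurements are consistent), which is exactly what makes the ``variance'' contribution collapse.

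\textbf{Step 2 (contraction and a stopping time).} By \eqref{eq:assumption}, whenever $\|\bx^{(k)}-\bz\|\le c_0$ the right-hand side above is at most $\bigl(1-\tfrac{L}{n}\bigr)\|\bx^{(k)}-\bz\|^2$. Let $\tau=\min\{k\ge0:\|\bx^{(k)}-\bz\|>c_0\}$; then $S_k:=\bigl(1-\tfrac{L}{n}\bigr)^{-(k\wedge\tau)}\|\bx^{(k\wedge\tau)}-\bz\|^2$ is a nonnegative supermartingale with $S_0=\|\bx^{(0)}-\bz\|^2\le c_0^2\delta_1$. Since $S_\tau>c_0^2$ on $\{\tau<\infty\}$, the maximal inequality for nonnegative supermartingales (Ville's inequality) gives $\Pr(\tau<\infty)\le S_0/c_0^2\le\delta_1$; hence with probability at least $1-\delta_1$ the iterates never leave $B(\bz,c_0)$ and the per-step contraction is in force at every step. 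Iterating the contraction on that event yields $\Expect\bigl[\|\bx^{(k)}-\bz\|^2\mathbf{1}_{\tau=\infty}\bigr]\le\bigl(1-\tfrac{L}{n}\bigr)^k\|\bx^{(0)}-\bz\|^2$, and combining this (via Markov's inequality conditioned on $\{\tau=\infty\}$) with the bound on $\Pr(\tau<\infty)$ in a union bound over the two bad events $\{\tau<\infty\}$ and $\{\|\bx^{(k)}-\bz\|^2>\epsilon\|\bx^{(0)}-\bz\|^2,\ \tau=\infty\}$ produces \eqref{eq:main1}.

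\textbf{Main obstacle.} The real work is in Step 1: establishing the exact identity relating one Kaczmarz projection to $f$ and its directional derivative. Once it is in hand everything afterward is routine stochastic approximation. The identity, however, hinges on carefully tracking the phases of the complex inner products $\ba_i^*\bx^{(k)}$ and $\ba_i^*\bz$ — precisely the place where \cite{10.1093/imaiai/iay005} gets stuck, since in the real case the relevant phase difference is only ever $0$ or $\pi$ while here it ranges over the whole circle. A minor additional point is to verify that the supermartingale/maximal-inequality bookkeeping reproduces the exact probability in \eqref{eq:main1}, and to dispose of the degenerate events on which $\ba_i^*\bx^{(k)}=0$.
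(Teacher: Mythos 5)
Your proposal is correct and follows essentially the same route as the paper: your Step 1 identity is exactly the paper's computation that $\Expect_{P}\|P\bx-\bz\|^2=f(\bx)+f'_{\bz-\bx}(\bx)+\|\bz-\bx\|^2$ (with $f(\bz)=0$), followed by the assumption \eqref{eq:assumption}. Your Step 2 (stopping time, nonnegative supermartingale with Ville's inequality, then conditional Markov plus a union bound) is precisely the argument the paper imports by citation from Tan and Vershynin's Theorem 3.1 and Corollary 3.2, so the two proofs coincide in substance.
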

\begin{proof}
Let $P$ be the random mapping $P_{\sA_i}$ where $i$ is uniformly sampled from $\{1,\cdots, m\}$, apply the projection formula \eqref{eq:projection} with the assumption $\|\ba_i\|=1$ for all $1\leq i\leq m$, then
\begin{align*}
  &\Expect_{P}\|P\bx-\bz\|^2=\Expect_{i\sim\{1,\cdots,m\}}\Big\|\bx-\Big(1-\frac{y_i}{|\ba_i^*\bx|}\Big)	\bx^*\ba_i\ba_i^*-\bz\Big\|^2\\=&\Expect_{i}\left[\Big(1-\frac{y_i}{|\ba_i^*\bx|}\Big)^2|\ba_i^*\bx|^2-\mathrm{Re} \left(2\Big(1-\frac{y_i}{|\ba_i^*\bx|}\Big)\bx^*\ba_i\ba_i^*(\bx-\bz)\right)\right]+\|\bz-\bx\|^2\\
  =&\Expect_i\Big[({|\ba_i^*\bx|}-{y_i})^2\Big]+2 \Expect_i \left[\mathrm{Re}\left( \Big(1-\frac{y_i}{|\ba_i^*\bx|}\Big)\bx^*\ba_i\ba_i^*(\bz-\bx)\right)\right]+\|\bz-\bx\|^2
  \\
  =&f(\bx)+f'_{\bz-\bx}(\bx)+\|\bz-\bx\|^2\leq \Big(1-\frac{L}{n}\Big)\|\bz-\bx\|^2,
  \end{align*}
where the last inequality applies the assumption \eqref{eq:assumption} and $f(\bz)=0$, and the last equality use the fact that $\by+\by^*=2\mathrm{Re}(\by)$ (this is a fact that we will apply repetitively later).

The rest of the proof follows from the proof in \cite[Section 3]{10.1093/imaiai/iay005}. Let $\tau=\min\{k: \|\bx^{(k)}-\bz\|\leq c_0\},$ then following the proof in \cite[Theorem 3.1]{10.1093/imaiai/iay005}, we have $
P(\tau<\infty)\leq \left(\frac{c_0\sqrt{\delta_1}}{c_0}\right)^2=\delta_1.$ Following the proof in \cite[Corollary 3.2]{10.1093/imaiai/iay005}, \eqref{eq:main1} is proved.
\end{proof}

\section{The property of an implicit objective function}\label{sec:step2}
In this section, we will give an explicit formula for $L$ defined in \eqref{eq:assumption}. The formula will be based on a few additional definitions as follows. Let  $f_i(\bx)=({|\ba_i^*\bz|}-{|\ba_i^*\bx|})^2$, and define the first and the second directional derivatives of $f_i$ of direction $\bv$ at $\bx$ by
\begin{align}
f'_{i,\bv}(\bx)=\lim_{t\rightarrow 0}\frac{f_i(\bx+t\bv)-f_i(\bx)}{t},\\
f''_{i,\bv}(\bx)=\lim_{t\rightarrow 0}\frac{f'_i(\bx+t\bv)-f_i(\bx)}{t}.
\end{align}
It can be shown that the directional derivatives have explicit expressions
\begin{align*}
f'_{i,\bv}(\bx)&=\ba_i^*\bx\bv^*\ba_i+\ba_i^*\bv\bx^*\ba_i-|\ba_i^*\bz|\frac{\ba_i^*\bx\bv^*\ba_i+\ba_i^*\bv\bx^*\ba_i}{|\ba_i^*\bx|},\\
f''_{i,\bv}(\bx)&=2\ba_i^*\bv\bv^*\ba_i-|\ba_i^*\bz|\frac{2\ba_i^*\bv\bv^*\ba_i}{|\ba_i^*\bx|}+|\ba_i^*\bz|\frac{(\ba_i^*\bx\bv^*\ba_i+\ba_i^*\bv\bx^*\ba_i)^2}{2|\ba_i^*\bx|^3}.
\end{align*}
In addition, since $f(\bx)=\sum_{i=1}^mf_i(\bx)$, the first and the second directional derivative of $f(\bx)$ are $f'_{\bv}(\bx)=\sum_{i=1}^mf'_{i,\bv}(\bx)$ and $f''_{\bv}(\bx)=\sum_{i=1}^mf''_{i,\bv}(\bx)$.

\begin{thm}\label{thm:main2}
  For any $\bv, \bz\in\bbC^n$ with $\|\bv\|=\|\bz\|=1$ and $\beta>0$, define
  \begin{equation}\label{eq:define_S}
  \calS(\bv,\beta)=\{1\leq i\leq m: \beta |\ba_i^*\bv|\geq   |\ba_i^*\bz|\}
  \end{equation}
  then for any $\alpha>1$,
  \begin{align}\nonumber
  L=&\frac{n}{m}\min_{\|\bv\|=1} \left\{\frac{m}{2}f''_{\bv}(\bz)- \frac{6}{\alpha-1}\sum_{i=1}^m|\ba_i^*\bv|^2-(2+4\alpha)\sum_{i\in\calS(\bv,c_0\alpha)}|\ba_i^*\bv|^2\right\}\\
  =&\frac{n}{m}\min_{\|\bv\|=1} \left\{\!\!\frac{1}{2}\!\sum_{i=1}^m\!\frac{(\ba_i^*\bz\bv^*\!\ba_i\!\!+\!\!\ba_i^*\bv\bz^*\!\ba_i\!)^2}{2|\ba_i^*\bz|^2}\!-\! \frac{6}{\alpha\!-\!1}\!\!\sum_{i=1}^m|\ba_i^*\bv|^2\!-\!(2\!+\!4\alpha)\!\!\!\!\!\!\!\!\sum_{i\in\calS(\bv,c_0\alpha)}\!\!\!\!\!\!\!\!|\ba_i^*\bv|^2\!\!\right\}.\label{eq:define_L}
  \end{align}
  satisfies the assumption \eqref{eq:assumption}.
\end{thm}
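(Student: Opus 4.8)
The plan is to verify the assumption \eqref{eq:assumption} by a Taylor-type argument: since $f(\bz)=0$, inequality \eqref{eq:assumption} is equivalent to showing
\[
f(\bx)+f'_{\bz-\bx}(\bx)+\frac{L}{n}\|\bz-\bx\|^2\leq 0\quad\text{whenever }\|\bx-\bz\|\leq c_0.
\]
Writing $\bx=\bz-t\bv$ with $\bv$ a unit vector and $0\le t\le c_0$, and expanding $g(t)\defby f(\bz-t\bv)=\sum_i f_i(\bz-t\bv)$, the left side becomes $g(t)+(-t)g'(t)+\tfrac{L}{n}t^2$, wait — more precisely $f'_{\bz-\bx}(\bx)=f'_{t\bv}(\bx)=t\,f'_{\bv}(\bz-t\bv)=-g'(t)\cdot(-1)\cdot$ sign bookkeeping; in any case the quantity to bound is $g(t)-t g'(t)+\tfrac{L}{n}t^2$, and since $g(0)=0$, the fundamental theorem of calculus gives $g(t)-tg'(t)=-\int_0^t s\,g''(s)\,ds$. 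So it suffices to prove $\int_0^t s\,g''(s)\,ds\geq \tfrac{L}{n}t^2$ for all $t\le c_0$, which (because $\int_0^t s\,ds=t^2/2$) follows if $g''(s)=f''_{\bv}(\bz-s\bv)\geq \tfrac{2L}{n}$ for all $s\in[0,c_0]$ and all unit $\bv$. Thus the whole theorem reduces to a uniform lower bound on the second directional derivative of $f$ along the segment from $\bz$.

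Next I would estimate $f''_{\bv}(\bz-s\bv)$ from below in terms of its value at the endpoint $\bz$, namely $f''_{\bv}(\bz)=\sum_i f''_{i,\bv}(\bz)=\sum_i \tfrac{(\ba_i^*\bz\bv^*\ba_i+\ba_i^*\bv\bz^*\ba_i)^2}{2|\ba_i^*\bz|^2}$, since at $\bx=\bz$ the first two terms of the displayed formula for $f''_{i,\bv}$ cancel. The core estimate is therefore a perturbation bound: $|f''_{i,\bv}(\bz-s\bv)-f''_{i,\bv}(\bz)|$ should be controlled by $|\ba_i^*\bv|^2$ times a factor depending on how close $|\ba_i^*(\bz-s\bv)|$ is to zero relative to $|\ba_i^*\bz|$. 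This is exactly where the set $\calS(\bv,\beta)=\{i:\beta|\ba_i^*\bv|\ge |\ba_i^*\bz|\}$ enters: for indices $i\notin\calS(\bv,c_0\alpha)$ one has $|\ba_i^*\bz|>c_0\alpha|\ba_i^*\bv|\ge \alpha s|\ba_i^*\bv|$ (using $s\le c_0$), so $|\ba_i^*(\bz-s\bv)|\ge |\ba_i^*\bz|(1-1/\alpha)$ stays safely bounded away from zero, and a direct but careful bound on each of the three terms of $f''_{i,\bv}$ gives something like $f''_{i,\bv}(\bz-s\bv)\ge f''_{i,\bv}(\bz)-\tfrac{6}{\alpha-1}|\ba_i^*\bv|^2$ — actually a bound on the whole term, not requiring $f''_{i,\bv}(\bz)$ as a reference when $\alpha-1$ is the governing quantity. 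For indices $i\in\calS(\bv,c_0\alpha)$, where $|\ba_i^*(\bz-s\bv)|$ could be tiny and $f''_{i,\bv}$ could be hugely negative or positive, I would instead discard the (possibly bad) middle term and crudely lower bound the contribution of such $i$ by $-(2+4\alpha)|\ba_i^*\bv|^2$, using $|\ba_i^*\bz|\le c_0\alpha|\ba_i^*\bv|$ to convert the $|\ba_i^*\bz|$ factors into $|\ba_i^*\bv|$ factors and $|\ba_i^*\bz|/|\ba_i^*\bx|\le $ a constant (or the full thing is bounded because the third term is nonnegative and the second term's size is controlled). Summing over $i$ and taking the worst-case unit $\bv$ yields precisely the claimed formula for $2mL/n$.

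The main obstacle, I expect, will be the per-index estimate for $i\notin\calS(\bv,c_0\alpha)$: one must bound $|f''_{i,\bv}(\bz-s\bv)-f''_{i,\bv}(\bz)|$ — or rather bound $f''_{i,\bv}(\bz-s\bv)$ below — using only $|\ba_i^*\bv|^2$ and the safety margin $1-1/\alpha$, and this requires carefully tracking the three pieces $2|\ba_i^*\bv|^2$, $-2|\ba_i^*\bz|\,|\ba_i^*\bv|^2/|\ba_i^*\bx|$, and $+|\ba_i^*\bz|(\ba_i^*\bx\bv^*\ba_i+\ba_i^*\bv\bx^*\ba_i)^2/(2|\ba_i^*\bx|^3)$, using $|\ba_i^*\bx|\ge |\ba_i^*\bz|(1-1/\alpha)$, $|\ba_i^*\bx|\le |\ba_i^*\bz|(1+1/\alpha)\le 2|\ba_i^*\bz|$, and $|\ba_i^*\bx\bv^*\ba_i+\ba_i^*\bv\bx^*\ba_i|\le 2|\ba_i^*\bx||\ba_i^*\bv|$; the constant $6/(\alpha-1)$ presumably emerges after combining these and comparing with $f''_{i,\bv}(\bz)$, whose sign and magnitude relative to $|\ba_i^*\bv|^2$ also need attention (note $f''_{i,\bv}(\bz)\le \tfrac{2|\ba_i^*\bv|^2\cdot|\ba_i^*\bz|^2}{2|\ba_i^*\bz|^2}=2|\ba_i^*\bv|^2$ by Cauchy–Schwarz, but it may be much smaller). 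A secondary but routine point is justifying the interchange of the directional derivative with the sum and the differentiation-under-the-integral step, plus handling the non-differentiability of $|\ba_i^*\bx|$ at points where $\ba_i^*\bx=0$ — these can be sidestepped since the segment from $\bz$ avoids such points for $i\notin\calS$ and the bad set is handled without derivatives. Once the pointwise bound $g''(s)\ge \tfrac{2L}{n}$ is in hand with $L$ as in \eqref{eq:define_L}, the reduction in the first paragraph closes the proof.
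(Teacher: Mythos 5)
Your overall architecture (an integral representation of $f(\bx)-f(\bz)+f'_{\bx-\bz}(\bx)$, a split of the indices according to $\calS(\bv,c_0\alpha)$, a perturbation bound for the good indices and a crude bound for the bad ones) is the same as the paper's, but the reduction in your first paragraph contains a genuine gap. You reduce \eqref{eq:assumption} to the \emph{pointwise} bound $g''(s)=f''_{\bv}(\bz-s\bv)\geq \tfrac{2L}{n}$ for all $s\in[0,c_0]$ and all unit $\bv$, via $g(t)-tg'(t)=-\int_0^t s\,g''(s)\di s$. No such pointwise bound holds with $L$ as in \eqref{eq:define_L}: for $i\in\calS(\bv,c_0\alpha)$ the quantity $|\ba_i^*(\bz-s\bv)|$ can be arbitrarily small along the segment (or vanish, where $f_i$ is not even twice differentiable), and the second and third terms of $f''_{i,\bv}(\bx)$ combine to $\tfrac{2|\ba_i^*\bz|}{|\ba_i^*\bx|}\bigl((\mathrm{Re}(\ba_i^*\bx\,\bv^*\ba_i))^2/|\ba_i^*\bx|^2-|\ba_i^*\bv|^2\bigr)\leq 0$, which is \emph{unbounded below} as $|\ba_i^*\bx|\to 0$. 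Your proposed rescue for the bad set — ``$|\ba_i^*\bz|/|\ba_i^*\bx|\leq$ a constant'' or ``the third term is nonnegative'' — is exactly what fails there: that ratio is controlled only off $\calS(\bv,c_0\alpha)$, where $|\ba_i^*\bz|\geq c_0\alpha|\ba_i^*\bv|$ keeps $|\ba_i^*\bx|\geq(1-1/\alpha)|\ba_i^*\bz|$. Hence the contribution of a bad index to $g''(s)$ is not bounded below by $-(2+4\alpha)|\ba_i^*\bv|^2$, and the pointwise second-derivative route cannot produce the stated $L$.

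The paper's proof avoids this by never taking a second derivative on the bad set. It uses the representation $m[f(\bx)-f(\bz)+f'_{\bx-\bz}(\bx)]=m\int_0^{c_0}\bigl[f'_{\bv}(\bz+t\bv)-f'_{\bv}(\bz+c_0\bv)\bigr]\di t$ (equivalently $g(t)-tg'(t)=\int_0^t[g'(s)-g'(t)]\di s$), converts the first-derivative increment into $-\int_t^{c_0}f''_{i,\bv}(\bz+a\bv)\di a$ only for $i\notin\calS(\bv,c_0\alpha)$, where the segment-wide perturbation bound $|f''_{i,\bv}(\bz+a\bv)-f''_{i,\bv}(\bz)|\leq\tfrac{12}{\alpha-1}|\ba_i^*\bv|^2$ holds (the $\tfrac{6}{\alpha-1}$ in \eqref{eq:define_L} then comes from the double integral over $\{0\le t\le a\le c_0\}$, which contributes the factor $c_0^2/2$, not from a pointwise $6/(\alpha-1)$ bound). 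For $i\in\calS(\bv,c_0\alpha)$ it bounds the first-derivative increment directly, $|f'_{i,\bv}(\bz+t\bv)-f'_{i,\bv}(\bz+c_0\bv)|\leq(2(c_0-t)+4c_0\alpha)|\ba_i^*\bv|^2$: here the dangerous quantity enters only through the unimodular phase $\ba_i^*\bx/|\ba_i^*\bx|$, multiplied by $|\ba_i^*\bz|\,|\ba_i^*\bv|\leq c_0\alpha|\ba_i^*\bv|^2$, and one integration in $t$ plus adding back $f''_{i,\bv}(\bz)/2\leq|\ba_i^*\bv|^2$ yields the coefficient $2+4\alpha$. Finally, the case $\|\bx-\bz\|<c_0$ is handled by the monotonicity $\calS(\bv,\|\bx-\bz\|\alpha)\subseteq\calS(\bv,c_0\alpha)$. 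With your reduction replaced by this first-derivative treatment of the bad indices, your outline matches the paper; as written, it does not go through.
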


\begin{proof}[Proof of Theorem~\ref{thm:main2}] We will first prove \eqref{eq:assumption} with $L$ defined in \eqref{eq:define_L} when $\|\bx-\bz\|= c_0$. For this case, there exists $\bv=\frac{\bx-\bz}{\|\bx-\bz\|}$ such that $\bx=\bz+ c_0\bv$ and $\|\bv\|=1$. For any $i\not\in\calS(\bv,c_0\alpha)$,  we have $|\ba_i^*\bz|\geq c_0\alpha |\ba_i^*\bv|$ and the triangle inequality implies
\[
\frac{|\ba_i^*(\bx-\bz)|}{|\ba_i^*\bz|} \leq \frac{1}{\alpha}, \,\,\,  \left|\frac{|\ba_i^*\bz|}{|\ba_i^*\bx|}-1\right|\leq \frac{1}{\alpha-1},\,\,\,\frac{|\ba_i^*\bz|}{|\ba_i^*\bx|}\leq \frac{\alpha}{\alpha-1}.
\]
Applying Lemma II.13 from \cite{zhang2020},
\[
\left|\frac{\ba_i^*\bx}{|\ba_i^*\bx|}-\frac{\ba_i^*\bz}{|\ba_i^*\bz|}\right|\leq 2\min\Big(\frac{|\ba_i^*(\bx-\bz)|}{|\ba_i^*\bz|},1\Big)=\frac{2}{\alpha}.
\]
Applying these inequalities, we have that for $i\not\in\calS(\bv,c_0\alpha)$,
\begin{align}
\nonumber&|f''_{i,\bv}(\bx)-f''_{i,\bv}(\bz)|\leq 2\left|\frac{|\ba_i^*\bz|}{|\ba_i^*\bx|}-1\right| \ba_i^*\bv\bv^*\ba_i+\left|\frac{|\ba_i^*\bz|}{|\ba_i^*\bx|}-1\right|\frac{(\ba_i^*\bz\bv^*\!\ba_i\!+\!\ba_i^*\bv\bz^*\!\ba_i)^2}{2|\ba_i^*\bz|^2}\\\nonumber&+ \frac{|\ba_i^*\bz|}{2|\ba_i^*\bx|} \Big[\big(\frac{\ba_i^*\bx}{|\ba_i^*\bx|}\bv^*\ba_i+\ba_i^*\bv\frac{\bx^*\ba_i}{|\bx^*\ba_i|}\big)^2-\big(\frac{\ba_i^*\bz}{|\ba_i^*\bz|}\bv^*\ba_i+\ba_i^*\bv\frac{\bz^*\ba_i}{|\bz^*\ba_i|}\big)^2\Big]\\\nonumber
\leq & \frac{2}{\alpha-1}\ba_i^*\bv\bv^*\ba_i+\frac{1}{\alpha-1}\frac{(\ba_i^*\bz\bv^*\ba_i+\ba_i^*\bv\bz^*\ba_i)^2}{2|\ba_i^*\bz|^2}\\\nonumber&+ \frac{|\ba_i^*\bz|}{2|\ba_i^*\bx|} \left[\big(\frac{\ba_i^*\bx}{|\ba_i^*\bx|}\bv^*\ba_i+\ba_i^*\bv\frac{\bx^*\ba_i}{|\bx^*\ba_i|}\big)-\big(\frac{\ba_i^*\bz}{|\ba_i^*\bz|}\bv^*\ba_i+\ba_i^*\bv\frac{\bz^*\ba_i}{|\bz^*\ba_i|}\big)\right]\\\nonumber&\left[\big(\frac{\ba_i^*\bx}{|\ba_i^*\bx|}\bv^*\ba_i+\ba_i^*\bv\frac{\bx^*\ba_i}{|\bx^*\ba_i|}\big)+\big(\frac{\ba_i^*\bz}{|\ba_i^*\bz|}\bv^*\ba_i+\ba_i^*\bv\frac{\bz^*\ba_i}{|\bz^*\ba_i|}\big)\right]\\\leq
&\frac{2}{\alpha-1}|\ba_i^*\bv|^2+\frac{2}{\alpha-1}|\ba_i^*\bv|^2+\frac{\alpha}{2(\alpha-1)}\left[\frac{4}{\alpha}|\ba_i^*\bv|\right]\Big[4|\ba_i^*\bv|\Big]\leq \frac{12}{\alpha-1}|\ba_i^*\bv|^2,\label{eq:S1}
\end{align}
where the intermediate inequalities use the fact $|\ba_i^*\bz\bv^*\ba_i|\leq |\ba_i^*\bz||\ba_i^*\bv|.$

For any $i\in\calS(\bv,c_0\alpha)$ and any $0\leq t\leq c_0$, we have
\[\left|\frac{\ba_i^*(\bz+t\bv)}{|\ba_i^*(\bz+t\bv)|}-\frac{\ba_i^*(\bz+c_0\bv)}{|\ba_i^*(\bz+c_0\bv)|}\right|\leq 2\]
and
\begin{align}\nonumber
&|f_{i,\bv}'(\bz+t\bv)-f_{i,\bv}'(\bz+c_0\bv)|\nonumber\\=& 2(t-c_0)|\ba_i^*\bv|^2-2|\ba_i^*\bz|\mathrm{Re}\left(\left(\frac{\ba_i^*(\bz+t\bv)}{|\ba_i^*(\bz+t\bv)|}-\frac{\ba_i^*(\bz+c_0\bv)}{|\ba_i^*(\bz+c_0\bv)|}\right)\bv\ba_i^*\right)\nonumber\\\leq&2(c_0-t)|\ba_i^*\bv|^2+4 |\ba_i^*\bz||\ba_i^*\bv|\leq (2(c_0-t)+4c_0\alpha)|\ba_i^*\bv|^2.\label{eq:S2}
\end{align}

Combining the two cases studied in \eqref{eq:S1} and \eqref{eq:S2}, we have\begin{align*}
&m [ f(\bx)-f(\bz)+f'_{\bx-\bz}(\bx)]=m[f(\bz+c_0\bv)-f(\bz)-c_0f'_{\bv}(\bz+c_0\bv)]\\=&m \int_{t=0}^{c_0}\Big[f'_{\bv}(\bz+t\bv)-f'_{\bv}(\bz+c_0\bv)\Big]\di t\\=&
\sum_{i\not\in\calS(\bv,c_0\alpha)}\int_{t=0}^{c_0}\Big[f_{i,\bv}'(\bz+t\bv)-f_{i,\bv}'(\bz+c_0\bv)\Big]\di t\\&+\sum_{i\in\calS(\bv,c_0\alpha)}\int_{t=0}^{c_0}\Big[f_{i,\bv}'(\bz+t\bv)-f_{i,\bv}'(\bz+c_0\bv)\Big]\di t\\
= & \!-\!\!\!\!\!\!\sum_{i\not\in\calS(\bv,c_0\alpha)}\!\!\!\!\!\int_{t=0}^{c_0}\!\!\int_{a=t}^{c_0}\!\! f_{i,\bv}''(\bz+a\bv) \di a \di t+\!\!\!\!\!\!\!\sum_{i\in\calS(\bv,c_0\alpha)}\!\!\!\!\!\int_{t=0}^{c_0}\!\!\Big[f_{i,\bv}'(\bz+t\bv)-f_{i,\bv}'(\bz+c_0\bv)\Big]\di t
\\\leq & - \sum_{i\not\in\calS(\bv,c_0\alpha)}\int_{t=0}^{c_0}\int_{a=t}^{c_0}\Big[f''_{i,\bv}(\bz)-\frac{12}{\alpha-1}|\ba_i^*\bv|^2\Big]\di a \di t\\&+\sum_{i\in\calS(\bv,c_0\alpha)}\int_{t=0}^{c_0}\Big[(2(b-t)+4c_0\alpha)|\ba_i^*\bv|^2\Big]\di t
\\=&c_0^2\left\{- \sum_{i\not\in\calS(\bv,c_0\alpha)} \Big[\frac{f''_{i,\bv}(\bz)}{2}-\frac{6}{\alpha-1}|\ba_i^*\bv|^2\Big]+\sum_{i\in\calS(\bv,c_0\alpha)}(1+4\alpha)|\ba_i^*\bv|^2 \right\}
\\\leq &c_0^2\left\{ - \sum_{i=1}^m \Big[\frac{f''_{i,\bv}(\bz)}{2}-\frac{6}{\alpha-1}|\ba_i^*\bv|^2\Big]\!+\!\!\!\!\!\!\sum_{i\in\calS(\bv,c_0\alpha)}\frac{f''_{i,\bv}(\bz)}{2}+\!\!\!\!\!\sum_{i\in\calS(\bv,c_0\alpha)}\!\!\!\!\!(2+4\alpha)|\ba_i^*\bv|^2 \right\}\\
\leq &{\|\bx-\bz\|^2}\left\{ - \frac{m}{2}f''_{\bv}(\bz) + \sum_{i=1}^m \frac{6}{\alpha-1}|\ba_i^*\bv|^2+\sum_{i\in\calS(\bv,c_0\alpha)}(2+4\alpha)|\ba_i^*\bv|^2 \right\},
\end{align*}
where the first inequality follows from \eqref{eq:S1}  and \eqref{eq:S2}, and  the last inequality applies the observation that $|f''_{i,\bv}(\bz)|=|\frac{(\ba_i^*\bx\bv^*\ba_i+\ba_i^*\bv\bx^*\ba_i)^2}{2|\ba_i^*\bx|^2}|\leq 2|\ba_i^*\bv|^2$. Recall the definition of $L$ in \eqref{eq:assumption}, we have proved \eqref{eq:assumption} with $L$ defined in \eqref{eq:define_L} when $\|\bx-\bz\|= c_0$.

When $\|\bx-\bz\|< c_0$, applying the same procedure we can show that \eqref{eq:assumption} holds with $L$ defined by
\begin{equation}\label{eq:define_L1}
\frac{n}{m}\min_{\|\bv\|=1} \left\{\frac{m}{2}f''_{\bv}(\bz)- \frac{6}{\alpha-1}\sum_{i=1}^m|\ba_i^*\bv|^2-(2+4\alpha)\sum_{i\in\calS(\bv,\|\bx-\bz\|\alpha)}|\ba_i^*\bv|^2\right\}.
\end{equation}
By definition, $\|\bx-\bz\|< c_0$ implies that $\calS(\bv,\|\bx-\bz\|\alpha)\subseteq \calS(\bv,c_0\alpha)$. As a result, the expression in \eqref{eq:define_L1} is greater or equal than $L$ defined in \eqref{eq:define_L}, and \eqref{eq:assumption} with $L$ defined in \eqref{eq:define_L} also holds when $\|\bx-\bz\| < c_0$, and Theorem~\ref{thm:main2} is proved.
\end{proof}

\section{The Gaussian model}\label{sec:step3}
In this section, we will show that under the Gaussian model, $L$ defined in \eqref{eq:assumption} and \eqref{eq:define_L} can be estimated and has a lower bound.
\begin{thm}\label{thm:main3}
Assuming that the  sensing vectors $\{\ba_i\}_{i=1}^m$ are selected uniformly and independently from the unit sphere in $\bbC^n$ and $2c_0\alpha>1$, then there exists $C_0$ that does not depend on $n$ and $m$ such that when $m>C_0n\log n$,  with probability at least $1-Cn\exp(-cn)$, $L$ defined in \eqref{eq:define_L} has a lower bound
  \begin{equation}\label{eq:main3}
L\geq \frac{c_1}{24}-\frac{6.6}{\alpha-1}-(2+4\alpha)C4\sqrt{2}c_0\alpha.
  \end{equation}
\end{thm}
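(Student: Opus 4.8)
The plan is to bound, uniformly over the unit sphere in $\bbC^{n}$, the three sums in the formula \eqref{eq:define_L} for $L$. Set $\bM=\sum_{i=1}^{m}\ba_i\ba_i^{*}$ and write $\tfrac{m}{n}L=\min_{\|\bv\|=1}\{T_1(\bv)-\tfrac{6}{\alpha-1}T_2(\bv)-(2+4\alpha)T_3(\bv)\}$ with
\[
T_1(\bv)=\sum_{i}\mathrm{Re}\Bigl(\tfrac{\ba_i^{*}\bz}{|\ba_i^{*}\bz|}\bv^{*}\ba_i\Bigr)^{2},\qquad T_2(\bv)=\bv^{*}\bM\bv=\sum_{i}|\ba_i^{*}\bv|^{2},\qquad T_3(\bv)=\sum_{i\in\calS(\bv,c_0\alpha)}|\ba_i^{*}\bv|^{2},
\]
the first identity being the simplification of $\tfrac12\sum_i\tfrac{(\ba_i^{*}\bz\bv^{*}\ba_i+\ba_i^{*}\bv\bz^{*}\ba_i)^{2}}{2|\ba_i^{*}\bz|^{2}}$. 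It then suffices to show, each on an event of probability at least $1-Cn\exp(-cn)$ once $m\ge C_0n\log n$: (a) $\inf_{\|\bv\|=1}\tfrac{n}{m}T_1(\bv)\ge\tfrac{c_1}{24}$; (b) $\lambda_{\max}(\bM)\le 1.1\tfrac{m}{n}$; (c) $\sup_{\|\bv\|=1}\tfrac{n}{m}T_3(\bv)\le C\cdot 4\sqrt2\,c_0\alpha$. Intersecting the three events and using $\tfrac{6}{\alpha-1}\cdot1.1=\tfrac{6.6}{\alpha-1}$ gives \eqref{eq:main3}.

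Claim (b) is immediate from matrix concentration: $\bM$ is a sum of $m$ i.i.d.\ positive semidefinite rank-one terms with $\|\ba_i\ba_i^{*}\|=1$ and $\Expect\bM=\tfrac{m}{n}\bI$, so matrix Chernoff gives $\Prob{\lambda_{\max}(\bM)\ge 1.1\tfrac{m}{n}}\le n\exp(-c\tfrac{m}{n})$, which is below $Cn\exp(-cn)$ for $C_0$ large. For (a), $T_1$ is a quadratic form in the $2n$ real coordinates of $\bv$, say $T_1(\bv)=\bv_{\mathbb R}^{\top}\bM_{\mathbb R}\bv_{\mathbb R}$ with $\bM_{\mathbb R}=\sum_i\bM_{\mathbb R,i}$, each $\bM_{\mathbb R,i}\succeq 0$ and $\|\bM_{\mathbb R,i}\|\le\sup_{\|\bv\|=1}|\ba_i^{*}\bv|^{2}=1$. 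A short second-moment computation (decompose $\bv=\gamma\bz+\bv_{\perp}$ with $\bz^{*}\bv_{\perp}=0$, and use that for $\ba_i$ uniform on the sphere the phase of $\ba_i^{*}\bz$ is uniform and independent of $|\ba_i^{*}\bz|$ and of $\ba_i^{*}\bv_{\perp}$, while $\Expect\ba_i\ba_i^{*}=\tfrac1n\bI$) gives $\Expect\tfrac{n}{m}T_1(\bv)=\bigl(\mathrm{Re}(\bz^{*}\bv)^{2}+\tfrac12(1-|\bz^{*}\bv|^{2})\bigr)(1+o(1))$. Over the directions that occur in the analysis---$\bv=(\bx-\bz)/\|\bx-\bz\|$ for $\bx$ in the basin, which stays transverse to the phase-ambiguity direction $\iu\bz$ along which $T_1$ vanishes, so that $\bz^{*}\bv$ is real up to lower order and the above equals $\tfrac12+\tfrac12|\bz^{*}\bv|^{2}(1+o(1))$---this is at least $\tfrac{c_1}{12}$ for an absolute constant $c_1$; hence $\lambda_{\min}(\Expect\bM_{\mathbb R})\ge\tfrac{c_1}{12}\tfrac{m}{n}$, and the matrix Chernoff bound for the least eigenvalue gives $\lambda_{\min}(\bM_{\mathbb R})\ge\tfrac12\cdot\tfrac{c_1}{12}\tfrac{m}{n}=\tfrac{c_1}{24}\tfrac{m}{n}$ with the stated probability.

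The substantive difficulty is (c), because the index set $\calS(\bv,c_0\alpha)$ itself depends on $\bv$. The key deterministic reduction is that on $\calS(\bv,c_0\alpha)$ one has $|\ba_i^{*}\bz|\le c_0\alpha|\ba_i^{*}\bv|$, hence $|\ba_i^{*}\bv|^{2}\le c_0\alpha\,|\ba_i^{*}\bv|^{3}/|\ba_i^{*}\bz|$, so that
\[
T_3(\bv)\;\le\; c_0\alpha\sum_{i=1}^{m}\frac{|\ba_i^{*}\bv|^{3}}{|\ba_i^{*}\bz|}\qquad\text{for every }\|\bv\|=1 ,
\]
and a short computation shows $\Expect\tfrac{n}{m}c_0\alpha\sum_i|\ba_i^{*}\bv|^{3}/|\ba_i^{*}\bz|$ is of order $c_0\alpha$, uniformly in $\bv$; it then remains only to pass to a uniform-in-$\bv$, high-probability bound. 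I would do this by (i) separating the indices with atypically small $|\ba_i^{*}\bz|$---of which there are, with high probability, at most of order $mn\tau^{2}$ below a level $\tau$ comparable to $1/n$, a count that is needed because $1/|\ba_i^{*}\bz|$ is only polynomially integrable---and bounding their contribution to $T_3(\bv)$ crudely by their count times $\sup_{\|\bv\|=1}|\ba_i^{*}\bv|^{2}\le1$; and (ii) for the remaining indices, running an $\eta$-net over the sphere, controlling the per-net-point sum by a Bernstein inequality, and transferring the estimate to every $\bv$ using $|\ba_i^{*}(\bv-\bv_0)|\le\eta$ together with the containment $\calS(\bv,c_0\alpha)\subseteq\calS(\bv_0,2c_0\alpha)$ for indices whose $|\ba_i^{*}\bz|$ is not too small. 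The hard part---where I expect the bulk of the work, and the reason the statement retains the explicit factor $c_0\alpha$ rather than a constant---is to tune the truncation level $\tau$, the net scale $\eta$, and the $\calS$-enlargement jointly so that none of the error terms (the truncated mass, the union-bound cost $\exp(O(n\log n))$, the Bernstein deviation) exceeds the target order $c_0\alpha\cdot m/n$.
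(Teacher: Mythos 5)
Your three-term decomposition and your claim (b) coincide with the paper's strategy (Lemma~\ref{lemma:term2} is exactly the operator-norm bound $\max_{\|\bv\|=1}\frac1m\sum_i|\ba_i^*\bv|^2\leq 1.1/n$), but your claim (a) does not hold as you execute it. Note that $\ba_i^*\bz\,(\iu\bz)^*\ba_i+\ba_i^*(\iu\bz)\,\bz^*\ba_i=0$ for every realization, so $T_1(\iu\bz)=0$ identically; consequently the real $2n\times 2n$ matrix $\bM_{\mathbb{R}}$ you construct has the real vector corresponding to $\iu\bz$ in the kernel of \emph{every} rank-one summand, hence $\lambda_{\min}(\Expect\bM_{\mathbb{R}})=0$ and no matrix Chernoff argument can yield $\lambda_{\min}(\bM_{\mathbb{R}})\geq\frac{c_1}{24}\frac{m}{n}$ --- your own expectation formula (which vanishes at $\bz^*\bv=\pm\iu$) already shows this. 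Your remedy, restricting to directions $\bv=(\bx-\bz)/\|\bx-\bz\|$ ``transverse to $\iu\bz$,'' proves a different statement from the one at hand: $L$ in \eqref{eq:define_L} is a minimum over \emph{all} unit $\bv\in\bbC^n$, and nothing in Theorem~\ref{thm:main2} restricts which directions must be covered (the iterates are not constrained to avoid the phase direction), so the restriction is not licensed without reworking the deterministic step. You have in fact put your finger on a delicate point --- the paper's Lemma~\ref{lemma:term1} obtains a bound for all $\bv$ by projecting $\ba_i$ onto the span of $(\bv,\bz)$, using that the projected direction is uniform on the circle in $\bbC^2$ and independent of the projection length, computing the conditional expectation with $\hat\bv$ taken real relative to $\hat\bz$, and then running Hoeffding plus an $\epsilon$-net with $\epsilon\sim c_1/n$ (which is precisely where $m\gtrsim n\log n$ is spent); the phase of $\bz^*\bv$ is the sensitive issue there --- but as written your step (a) is a gap, not an alternative route.

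Your claim (c) is also not a proof: beyond the tuning you explicitly leave open, one sketched step already destroys the scaling the theorem needs. The bound on the third term must be proportional to $c_0\alpha$ (times an absolute constant), because in the proof of Theorem~\ref{thm:main} one takes $\alpha$ large and $c_0\sim\alpha^{-2}$ so that $(2+4\alpha)$ times this bound drops below $c_1/72$; a bound of order $m/n$ with no $c_0\alpha$ factor is useless. But your treatment of the indices with small $|\ba_i^*\bz|$ --- count of order $mn\tau^2\sim m/n$ at level $\tau\sim 1/n$, each bounded crudely by $\sup_{\|\bv\|=1}|\ba_i^*\bv|^2\leq 1$ --- already contributes order $m/n$ to $T_3$, i.e.\ an order-one contribution to $\frac{n}{m}T_3$ independent of $c_0\alpha$, so the error budget does not close. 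The paper avoids exactly this loss: it first bounds the \emph{cardinality} $|\calS(\bv,\beta/2)|\leq c_3 m$ with $c_3=2\beta^2$, $\beta=2c_0\alpha$, uniformly in $\bv$ (Lemma~\ref{lemma:term34} for fixed $\bv$, Lemma~\ref{lemma:term31} to transfer along an $\epsilon$-net --- this part is close in spirit to your $\calS$-enlargement and $\|\ba_i\|/|\ba_i^*\bv|$ control), and then invokes the subset-uniform operator-norm bound of Lemma~\ref{lemma:term32} (Tan--Vershynin), $\|\sum_{i\in\calS}\ba_i\ba_i^*\|\leq C\sqrt{c_3}\,m/n$ for all sets with $|\calS|\leq c_3m$ simultaneously; it is this chaining bound, not a per-index worst-case bound, that produces the factor $\sqrt{c_3}\sim c_0\alpha$. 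To salvage your route you would have to replace the ``count times one'' step by something of this type (or apply the small-subset bound to the bad indices as well), so as proposed (c) is the second genuine gap.
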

\begin{proof}
The proof is based on bounding each component in \eqref{eq:define_L} separately in  Lemma~\ref{lemma:term2}, Lemma~\ref{lemma:term1}, and Lemma~\ref{lemma:term3}. Combining these estimations with $\delta=\exp(-n\log n)$ in Lemma~\ref{lemma:term2} and $\beta=2c_0\alpha$ in Lemma~\ref{lemma:term3}, Theorem~\ref{thm:main3} is proved.\end{proof}
The following is a restatement of~\cite[Lemma 5.2]{vershynin2010introduction}. While the original setting is real-valued, it is easy to generalize it to the complex-valued setting by treating $\bbC^n$ as $\reals^{2n}$.
\begin{lemma}[Covering numbers of the sphere]\label{lemma:covering}
  There exists an $\epsilon$-net over the unit sphere in $\bbC^{n}$ equipped with the Euclidean metric, with at most $(1+\frac{2}{\epsilon})^{2n}$ points.
\end{lemma}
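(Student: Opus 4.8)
The plan is to run the classical volumetric (packing) argument in the real vector space $\reals^{2n}$ into which $\bbC^n$ embeds isometrically. First I would note that splitting every coordinate of a vector in $\bbC^n$ into its real and imaginary parts defines an $\reals$-linear bijection $\bbC^n\to\reals^{2n}$ that carries the Euclidean norm $\|\cdot\|$ on $\bbC^n$ exactly onto the standard Euclidean norm on $\reals^{2n}$; under this identification the unit sphere of $\bbC^n$ becomes the unit sphere $S^{2n-1}\subset\reals^{2n}$, so it suffices to exhibit an $\epsilon$-net of $S^{2n-1}$ of cardinality at most $(1+2/\epsilon)^{2n}$.

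Next I would let $\mathcal{N}\subseteq S^{2n-1}$ be a \emph{maximal} $\epsilon$-separated subset, i.e.\ a subset with $\|\bx-\by\|>\epsilon$ for all distinct $\bx,\by\in\mathcal{N}$ that is not properly contained in any larger such subset; such an $\mathcal{N}$ exists by compactness of the sphere (or by Zorn's lemma). Maximality immediately gives that $\mathcal{N}$ is an $\epsilon$-net: if some $\bw\in S^{2n-1}$ had $\|\bw-\bx\|>\epsilon$ for every $\bx\in\mathcal{N}$, then $\mathcal{N}\cup\{\bw\}$ would still be $\epsilon$-separated, contradicting maximality; hence every point of the sphere lies within distance $\epsilon$ of $\mathcal{N}$.

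The cardinality bound then follows from comparing volumes in $\reals^{2n}$. Because distinct points of $\mathcal{N}$ are more than $\epsilon$ apart, the open balls $B(\bx,\epsilon/2)$ with $\bx\in\mathcal{N}$ are pairwise disjoint; and because each $\bx$ has $\|\bx\|=1$, each such ball is contained in $B(\b0,1+\epsilon/2)$. Since a Euclidean ball of radius $r$ in $\reals^{2n}$ has volume $r^{2n}$ times that of the unit ball, disjointness and containment give $|\mathcal{N}|\,(\epsilon/2)^{2n}\le(1+\epsilon/2)^{2n}$, i.e.\ $|\mathcal{N}|\le(1+2/\epsilon)^{2n}$, which is the claimed bound.

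I do not expect any genuine obstacle: this is a textbook estimate, and the only steps needing a line of justification are the (routine) check that the real/imaginary-part identification is an isometry for the norm used in this paper and the existence of a maximal $\epsilon$-separated set. If one wants to avoid invoking Zorn's lemma, one can instead build $\mathcal{N}$ greedily, adding points of the sphere at distance $>\epsilon$ from all previously chosen points until none remain; the same volume computation shows that any $\epsilon$-separated subset of $S^{2n-1}$ has at most $(1+2/\epsilon)^{2n}$ elements, so the process terminates and yields the desired net.
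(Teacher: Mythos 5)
Your proof is correct and follows the same route as the paper, which simply cites the real-valued covering bound of Vershynin (Lemma~5.2 of \cite{vershynin2010introduction}) together with the identification of $\bbC^n$ with $\reals^{2n}$; you have merely written out in full the standard maximal-$\epsilon$-separated-set plus volume-comparison argument that underlies that cited lemma. No gaps.
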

The following lemma bounds the second term in \eqref{eq:define_L}. In fact, this term is the squared operator norm of $\bA$ and has been well studied, and the following result follows from \cite[Lemma 5.8]{10.1093/imaiai/iay005}.
\begin{lemma}[The bound on the second term in \eqref{eq:define_L}]\label{lemma:term2}
 If $m\geq C(n+\sqrt{\log(1/\delta)})$, then for any $\delta>0$,
\[
\Pr\left(\max_{\|\bv\|=1}\frac{1}{m}\sum_{i=1}^m\|\ba_i^*\bv\|^2\leq \frac{1.1}{n}\right)\geq 1-\delta.
\]
\end{lemma}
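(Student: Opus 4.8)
The plan is to read the left-hand side as an operator norm and then run a covering-number argument on the unit sphere together with a sharp tail bound for each summand. Put $\bA=[\ba_1,\dots,\ba_m]^*\in\bbC^{m\times n}$; then $\frac1m\sum_{i=1}^m|\ba_i^*\bv|^2=\bv^*\bigl(\frac1m\bA^*\bA\bigr)\bv$, so the quantity to bound is $\max_{\|\bv\|=1}\bv^*\bigl(\frac1m\sum_{i=1}^m\ba_i\ba_i^*\bigr)\bv$, the top eigenvalue of an average of i.i.d.\ rank-one Hermitian matrices. By unitary invariance of the uniform measure on the sphere, $\Expect[\ba_i\ba_i^*]=\frac1n\bI$, so $\frac{1.1}{n}$ is just a $10\%$ relative slack above the mean; the statement is therefore an instance of the usual operator-norm concentration for tall matrices with isotropic-type rows, and is exactly \cite[Lemma 5.8]{10.1093/imaiai/iay005} read over $\bbC^n\cong\reals^{2n}$.

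For a self-contained argument I would first fix a unit vector $\bv$ and analyze the i.i.d.\ scalars $\xi_i:=|\ba_i^*\bv|^2\in[0,1]$. Writing $\ba_i=\bg_i/\|\bg_i\|$ with $\bg_i$ standard complex Gaussian and using unitary invariance, $\xi_i$ is $\mathrm{Beta}(1,n-1)$-distributed, so $\Expect[\xi_i]=\frac1n$ and $\Expect[(n\xi_i)^k]=k!\prod_{j=0}^{k-1}\frac{n}{n+j}\le k!$ for every $k\ge1$; hence $n\xi_i$ has an absolute sub-exponential ($\psi_1$) constant. Bernstein's inequality for sub-exponential variables then gives, for a small absolute constant $s>0$ (that is, bounding the event $\frac1m\sum_i n\xi_i\ge 1+s$),
\[
\Pr\!\left(\frac1m\sum_{i=1}^m|\ba_i^*\bv|^2\ \ge\ \frac{1+s}{n}\right)\ \le\ 2\exp(-cm),
\]
with $c$ absolute. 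Next I would take an $\epsilon$-net $\mathcal N$ of the unit sphere in $\bbC^n$ from Lemma~\ref{lemma:covering}, choosing $\epsilon$ a small absolute constant so that $(1-\epsilon)^{-2}(1+s)\le 1.1$ and $|\mathcal N|\le e^{Cn}$; the standard net-to-sphere comparison $\|\bA\|\le(1-\epsilon)^{-1}\max_{\bv_0\in\mathcal N}\|\bA\bv_0\|$ upgrades the per-vector bound on $\mathcal N$ to the claimed bound over the whole sphere. A union bound over $\mathcal N$ gives failure probability at most $e^{Cn}\cdot 2e^{-cm}$, which is below $\delta$ once $m\gtrsim n+\log(1/\delta)$; in particular it is $\le\exp(-n\log n)$ whenever $m>C_0n\log n$, which is the only instance used in Theorem~\ref{thm:main3}.

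The one nontrivial ingredient — and the point where a lazy argument breaks — is the per-vector tail estimate. If one uses only $0\le\xi_i\le1$ and the crude variance bound $\mathrm{Var}(\xi_i)\le\Expect[\xi_i]=\frac1n$, plain Bernstein gives only $\exp(-cm/n)$ for the event above, which cannot beat the $e^{Cn}$ cardinality of any net of the sphere and would force $m\gtrsim n^2$. Exploiting that $n|\ba_i^*\bv|^2$ is genuinely well concentrated (sub-exponential with an $O(1)$ parameter, asymptotically $\mathrm{Exp}(1)$) is exactly what brings the requirement down to the near-optimal $m\gtrsim n$; the moment computation above is thus the technical core, while the covering and union-bound steps are routine, and one may equally just invoke the cited operator-norm lemma.
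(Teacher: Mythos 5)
Your proposal is correct, and its first route coincides with the paper's entire proof: the paper simply notes that the quantity is the squared operator norm of the measurement matrix and invokes \cite[Lemma 5.8]{10.1093/imaiai/iay005}, exactly as you do. What you add is a self-contained verification that the paper omits: the observation that for $\ba_i$ uniform on the complex unit sphere the scalar $|\ba_i^*\bv|^2$ is $\mathrm{Beta}(1,n-1)$, so that $\Expect\big[(n|\ba_i^*\bv|^2)^k\big]\le k!$ and $n|\ba_i^*\bv|^2$ has an absolute sub-exponential constant; this is precisely what makes the per-vector Bernstein tail $\exp(-cm)$ strong enough to survive the $e^{Cn}$-size net from Lemma~\ref{lemma:covering}, and your remark that the naive variance-only bound gives merely $\exp(-cm/n)$ (forcing $m\gtrsim n^2$) correctly identifies the crux. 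One small mismatch with the statement as printed: your union bound requires $m\geq C(n+\log(1/\delta))$, whereas the lemma is stated with $\sqrt{\log(1/\delta)}$; since a failure probability of order $e^{-cm}$ is unavoidable for this event (a constant fraction $\sim m/n$ of nearly aligned rows already violates the bound), the $\log(1/\delta)$ dependence is in fact the correct one, and your condition is satisfied in the only place the lemma is applied, namely Theorem~\ref{thm:main3} with $\delta=\exp(-n\log n)$ and $m\geq C_0 n\log n$. So either route --- citing the lemma, or your Beta-moment/Bernstein/net argument --- fully supports the paper's use of the bound.
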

The following lemma bounds the first term in \eqref{eq:define_L}. The proof is based on an $\epsilon$-net argument. The proof is deferred to Section~\ref{sec:term1}.
\begin{lemma}[The bound on the first term in \eqref{eq:define_L}]\label{lemma:term1}
For any fixed $\bz, \bv\in\mathbb{C}^n$ with $\|\bz\|=\|\bv\|=1$, assuming that $\{\ba_i\}_{i=1}^m$ are selected uniformly and independently from the unit sphere in $\mathbb{C}^n$, there exists $c_1>0$ such that
\begin{align}\nonumber
  & \Pr\left(\min_{\|\bv\|=1}\frac{1}{m}\sum_{i=1}^m\frac{(\ba_i^*\bz\bv^*\ba_i+\ba_i^*\bv\bz^*\ba_i)^2}{2|\ba_i^*\bz|^2}\geq \frac{c_1m}{24n}\right)\\\geq& 1-\left(\exp\left(-\frac{m}{576}\right)+\exp\left(-\frac{m}{8}\right)\right)\left(1+\frac{192n}{c_1}\right)^{2n}.  \label{eq:term1_5}
  \end{align}
\end{lemma}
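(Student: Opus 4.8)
The plan is to prove Lemma~\ref{lemma:term1} in four steps: reduce each summand to an explicit scalar random variable, establish a small-ball lower bound for a single summand, convert this into a lower bound for the empirical average by a Chernoff bound, and finally pass from a fixed $\bv$ to the uniform bound over the sphere by an $\epsilon$-net argument. First, set $\phi_i=\ba_i^*\bz/|\ba_i^*\bz|$ and $\psi_i(\bv)=\mathrm{Re}(\phi_i\,\bv^*\ba_i)$. Since $\overline{\ba_i^*\bz\,\bv^*\ba_i}=\ba_i^*\bv\,\bz^*\ba_i$, the fact $\by+\by^*=2\,\mathrm{Re}(\by)$ gives $\ba_i^*\bz\,\bv^*\ba_i+\ba_i^*\bv\,\bz^*\ba_i=2\,\mathrm{Re}(\ba_i^*\bz\,\bv^*\ba_i)$, so the $i$-th summand equals $2\,\psi_i(\bv)^2$. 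Decomposing $\bv=\rho\bz+\sigma\bu$ with $\bu\perp\bz$, $\|\bu\|=1$, $\rho=\bz^*\bv$ and $|\sigma|^2=1-|\rho|^2$, one has $\phi_i\bv^*\ba_i=\bar\rho\,|\ba_i^*\bz|+\bar\sigma\,\phi_i\,\bu^*\ba_i$.

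The heart of the argument is the small-ball estimate for $2\psi_i(\bv)^2$. Conditioning on $|\ba_i^*\bz|$, the component of $\ba_i$ orthogonal to $\bz$ is rotationally symmetric, so $\phi_i\,\bu^*\ba_i$ is conditionally distributed as $\bu^*\ba_i$, which behaves like a complex Gaussian of variance $(1-|\ba_i^*\bz|^2)/n\approx 1/n$; moreover $|\ba_i^*\bz|^2\sim\mathrm{Beta}(1,n-1)$, so $|\ba_i^*\bz|^2\in[\tfrac{1}{2n},\tfrac{2}{n}]$ with probability bounded below. Hence $\psi_i(\bv)$ is, up to lower-order terms, $|\ba_i^*\bz|\,\mathrm{Re}(\rho)$ plus an independent $N\!\bigl(0,\tfrac{|\sigma|^2}{2n}\bigr)$, and a short computation gives $\Expect[\psi_i(\bv)^2]=\tfrac{1}{n}\bigl(\mathrm{Re}(\rho)^2+\tfrac{|\sigma|^2}{2}\bigr)=\tfrac{1}{2n}\bigl(1+\mathrm{Re}(\rho)^2-\mathrm{Im}(\rho)^2\bigr)$. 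From this I would extract absolute constants $t_0,p_0>0$ with $\Pr(2\psi_i(\bv)^2\ge t_0/n)\ge p_0$, uniformly over $\bv$ for which $\mathrm{Im}(\bz^*\bv)$ is bounded away from $\pm1$: if $|\sigma|$ is bounded below, the Gaussian term alone forces $|\psi_i(\bv)|\gtrsim 1/\sqrt n$ with constant probability; if $|\sigma|$ is small, then $|\rho|\approx1$ forces $|\mathrm{Re}(\rho)|$ bounded below, so $|\ba_i^*\bz|\,|\mathrm{Re}(\rho)|\gtrsim 1/\sqrt n$ with constant probability over $|\ba_i^*\bz|$. (The degenerate direction $\bv\propto i\bz$ has to be excluded, since there every summand vanishes identically; this is harmless for the application, where $\bv=(\bx-\bz)/\|\bx-\bz\|$ after aligning the global phase, so $\bz^*\bv$ is real.)

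Next, let $N=\#\{i:2\psi_i(\bv)^2\ge t_0/n\}\sim\mathrm{Binomial}(m,q)$ with $q\ge p_0$. A multiplicative Chernoff bound gives $\Pr(N<\tfrac{p_0}{2}m)\le\exp(-c m)$, and on the complement $\tfrac{1}{m}\sum_{i=1}^m 2\psi_i(\bv)^2\ge\tfrac{t_0p_0}{2n}$; choosing $c_1$ proportional to $t_0p_0$ yields the fixed-$\bv$ form of \eqref{eq:term1_5}, with one exponential coming from this Chernoff bound and the second from an auxiliary Chernoff bound controlling the number of indices with anomalously small $|\ba_i^*\bz|$ (needed to make the next step quantitative). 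Finally, Lemma~\ref{lemma:covering} supplies an $\epsilon$-net $\mathcal N$ of the unit sphere with $\epsilon$ of order $c_1/n$ and $|\mathcal N|\le(1+192n/c_1)^{2n}$; a union bound of the fixed-$\bv$ estimate over $\mathcal N$ produces the factor $(1+192n/c_1)^{2n}$ multiplying the two exponentials. For arbitrary $\bv$, pick $\bv'\in\mathcal N$ with $\|\bv-\bv'\|\le\epsilon$; then $|2\psi_i(\bv)^2-2\psi_i(\bv')^2|\le 4\,|\ba_i^*(\bv-\bv')|\le 4\epsilon$ (using $|\psi_i(\bv)|\le|\bv^*\ba_i|\le1$ and $\|\ba_i\|=1$), so the two empirical averages differ by $O(\epsilon)$, which is absorbed into $c_1$, transferring the bound to the whole sphere.

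The hard part will be the small-ball estimate: showing that the summand is $\Omega(1/n)$ with a \emph{constant} probability that does not degenerate as $\bv$ moves over the sphere. Because the summand is identically zero along $\bv\propto i\bz$, the argument must keep explicit track of how the small-ball probability depends on $\mathrm{Im}(\bz^*\bv)$ and exploit that the directions of interest arise from phase-aligned differences $\bx-\bz$. Turning the complex-Gaussian heuristic for $\phi_i\bu^*\ba_i$ given $|\ba_i^*\bz|$ into rigorous anti-concentration estimates for the underlying Beta variables, and pinning the constants $t_0,p_0,c_1$ so that the claimed $\exp(-m/576)$, $\exp(-m/8)$ and $(1+192n/c_1)^{2n}$ emerge, is the main technical bookkeeping.
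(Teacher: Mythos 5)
Your overall architecture coincides with the paper's: a fixed-$\bv$ concentration bound for $\frac{1}{m}\sum_{i=1}^m\frac{(2\mathrm{Re}(\ba_i^*\bz\bv^*\ba_i))^2}{2|\ba_i^*\bz|^2}$, a Lipschitz perturbation estimate of order $\|\bv-\bv'\|$ (your $|2\psi_i(\bv)^2-2\psi_i(\bv')^2|\le 4\|\bv-\bv'\|$ matches the paper's $4m\epsilon$ bound), and a union bound over the net of Lemma~\ref{lemma:covering} with $\epsilon\sim c_1/n$, which is where the factor $(1+192n/c_1)^{2n}$ comes from. The fixed-$\bv$ step is implemented differently but equivalently in spirit: the paper factors each summand as a bounded ``angular'' term in $[0,2]$ (computed after projecting $\ba_i$ onto $\Sp(\bv,\bz)$) times $\|P_{\Sp(\bv,\bz)}\ba_i\|^2$, lower-bounds the conditional mean of the angular term by $1/4$, and applies Hoeffding twice (giving the two exponentials $\exp(-m/576)$ and $\exp(-m/8)$); your small-ball estimate for $2\psi_i(\bv)^2$ plus a binomial Chernoff count would deliver the same two-exponential structure with constants to be tuned. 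That part of your plan is sound.

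The substantive divergence is your treatment of the direction $\bv\propto i\bz$, and here your diagnosis is correct but it means you are proving a different statement from Lemma~\ref{lemma:term1}. As you observe, every summand vanishes identically at $\bv=i\bz$, so the minimum over the whole complex unit sphere is $0$ and the unrestricted bound cannot hold; your proposal only establishes the bound for $\bv$ with $\mathrm{Im}(\bz^*\bv)$ bounded away from $\pm1$. The paper makes no such restriction: its proof sets ``WLOG $\hat\bz=[1,0]$, $\hat\bv=[\cos\theta,\sin\theta]$'', silently discarding the relative phase of $\bz^*\bv$; with $\hat\bv=[e^{i\phi}\cos\theta,\sin\theta]$ the conditional mean is $\frac12\cos^2\phi\cos^2\theta+\frac14\sin^2\theta$, which is not bounded below by $1/4$ and degenerates exactly along the direction you flagged. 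So your restriction exposes a genuine gap in the paper's own argument rather than a defect of yours. However, your patch --- that the restriction is harmless because one may align the global phase so that $\bz^*\bv$ is real --- is not available within the paper as written: Theorems~\ref{thm:main1} and~\ref{thm:main2} use the unaligned distance $\|\bx-\bz\|$ and the minimum in \eqref{eq:define_L} runs over all unit $\bv$, so invoking your restricted lemma would require reworking the deterministic condition and the convergence statement in terms of $\min_\theta\|\bx-e^{i\theta}\bz\|$, which you do not do. In short: your fixed-$\bv$ and net steps are fine modulo routine constant-chasing (pinning $t_0,p_0$ to recover $576$, $8$, $192n/c_1$), but the proposal proves a restricted version of the lemma and leaves the interface with the rest of the paper unresolved, so it is not yet a proof of the statement as stated --- though the statement as stated is itself false at $\bv=i\bz$, which is worth recording.
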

The following lemma bounds the last term in \eqref{eq:define_L}. We remark that it shares some similarities with the estimation in~\cite[Theorem 5.7]{10.1093/imaiai/iay005}, in the sense that both estimations depend on a ``wedge''. However, the ``wedge'' $\calS(\bv,\beta/2)$ in the complex setting is more complicated and the argument based on VC theory in \cite{10.1093/imaiai/iay005} does not apply. Instead, the proof of Lemma~\ref{lemma:term3} consists of two parts: first, we control the size of $\calS(\bv,\beta/2)$ based on an $\epsilon$-net argument. Then we can invoke the metric entropy/chaining argument in \cite{10.1093/imaiai/iay005}. The proof is deferred to Section~\ref{sec:term3}.
\begin{lemma}[The bound on the third term in \eqref{eq:define_L}]\label{lemma:term3}
  Assume $\beta>1$ and  $m\geq \max(n,\frac{n\log n}{2\beta^2})$, then there exists $C$ that does not depend on $n, m, \beta$ such that
  \[
\Pr\!\left(\! \max_{\|\bv\|=1} \! \!\!\!\sum_{i\in\calS(\bv,\beta/2)}\!\!\!\!|\ba_i^*\bv|^2 \leq
  C \sqrt{2}\beta\frac{m}{n}\!\right)\!\!\geq 1-2n\exp(-n)-2\exp\!\Big(\!-\frac{m\beta^4}{2}\!\Big)(1+2n)^{2n}.
  \]
  \end{lemma}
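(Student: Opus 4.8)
The plan is to follow the two-step route announced before the statement: first bound the cardinality $|\calS(\bv,\beta/2)|$ uniformly over the sphere, then bound the projection mass carried by \emph{any} index set of that size.

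\textit{Step 1: size of the wedge.} Fix a unit vector $\bv$. Since $\ba_i$ is uniform on the complex unit sphere, $(\sqrt n\,\ba_i^*\bv,\sqrt n\,\ba_i^*\bz)$ is, up to $1+o(1)$ factors, a standard complex Gaussian pair with correlation $\ip{\bv}{\bz}$; conditioning on $\ba_i^*\bv$ and using the exact density of $|\ba_i^*\bz|^2$ (a $\mathrm{Beta}(1,n-1)$ law) gives
\[
\Pr\big(i\in\calS(\bv,\beta/2)\big)=\Pr\big((\beta/2)|\ba_i^*\bv|\ge|\ba_i^*\bz|\big)\ \le\ \frac{(\beta/2)^2}{1+(\beta/2)^2}\ \le\ \min\!\big(\tfrac14\beta^2,\,1\big),
\]
so $\Expect|\calS(\bv,\beta/2)|\le c\beta^2 m$ (and trivially $\le m$). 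As $\{\mathbf 1\{i\in\calS(\bv,\beta/2)\}\}_{i=1}^m$ are i.i.d.\ Bernoulli, a Chernoff bound gives $|\calS(\bv,\beta/2)|\le K:=C\min(\beta^2,1)m$ off an event of probability $\exp(-cm\beta^4)$. To make this uniform in $\bv$, take the $(1/n)$-net of the sphere from Lemma~\ref{lemma:covering} (cardinality $(1+2n)^{2n}$) and use the containment: if $\|\bv-\bv_0\|\le 1/n$ then $|\ba_i^*\bv|\le|\ba_i^*\bv_0|+1/n$, hence $i\in\calS(\bv,\beta/2)$ implies $i\in\calS(\bv_0,\beta)$ \emph{unless} $|\ba_i^*\bz|<\beta/n$. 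The set of these exceptional indices has expected size $O(m\beta^2/n)$ (density of $|\ba_i^*\bz|^2$ again), so by another Chernoff bound together with $m\ge n\log n/(2\beta^2)$ it has size $O(m\beta^2/n)\ll K$ with probability $1-O(n e^{-n})$ after a slightly finer accounting. Combining, $|\calS(\bv,\beta/2)|\le 2K$ for \emph{all} unit $\bv$, except on an event of probability $\le 2n e^{-n}+2\exp(-cm\beta^4)(1+2n)^{2n}$.

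\textit{Step 2: mass on a set of bounded size.} On the event of Step 1, every unit $\bv$ satisfies
\[
\sum_{i\in\calS(\bv,\beta/2)}|\ba_i^*\bv|^2\ \le\ \max_{|S|\le 2K}\ \sum_{i\in S}|\ba_i^*\bv|^2,\qquad \max_{\|\bv\|=1}\ \max_{|S|\le 2K}\ \sum_{i\in S}|\ba_i^*\bv|^2=\max_{|S|\le 2K}\Big\|\textstyle\sum_{i\in S}\ba_i\ba_i^*\Big\|.
\]
Since $2K$ may be a constant fraction of $m$, a naive union bound over the $\binom{m}{2K}$ subsets is far too lossy; instead one invokes the metric entropy / chaining estimate of \cite[Theorem 5.7]{10.1093/imaiai/iay005}, which controls exactly this worst-subset operator norm (stated there for real Gaussians and VC-structured wedges, but the subset version needed here only uses $|S|\le 2K$, so it transfers to the complex sphere). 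The sum of the $2K$ largest among $m$ i.i.d.\ copies of $|\ba_i^*\bv|^2\sim\mathrm{Exp}(n)$ is $\lesssim\frac{K}{n}\big(1+\log\frac mK\big)$; with $K\asymp\min(\beta^2,1)m$ and $\beta>1$ this is at most $C\sqrt2\,\beta\,\frac mn$ (the cushion between $\min(\beta^2,1)\big(1+\log\frac1{\min(\beta^2,1)}\big)$ and $\beta$ is precisely what makes the constant work, and the crude factor $\sqrt2$ absorbs the loss $(|\ba_i^*\bv_0|+1/n)^2\le 2|\ba_i^*\bv_0|^2+2/n^2$ incurred when passing from net points back to $\bv$). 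Feeding $K$ and the net cardinality into the chaining bound produces the claimed probability, including the small $+2\exp(-m\beta^4/2)$ term (the $\beta^4$ being what survives the Chernoff exponent after the net union bound).

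The main obstacle I expect is the interface between the two steps. The net must be fine enough ($\epsilon\sim 1/n$) that the leakage of $\calS(\bv,\beta/2)$ into $\calS(\bv_0,\beta)$ plus the exceptional set $\{i:|\ba_i^*\bz|<\beta/n\}$ is genuinely lower order, which is what forces the $(1+2n)^{2n}$ cost; yet the Chernoff exponent must still dominate $2n\log(1+2n)$, so one has to track the $\beta$-dependence carefully, since $\beta=2c_0\alpha$ is only guaranteed $>1$ and may be close to $1$ — this is exactly why the statement carries the awkward $\exp(-m\beta^4/2)$ and the hypothesis $m\ge\max(n,n\log n/(2\beta^2))$. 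A secondary point is making the adaptation of \cite[Theorem 5.7]{10.1093/imaiai/iay005} to arbitrary bounded-cardinality index sets (rather than wedges) fully rigorous, including the $2ne^{-n}$ book-keeping term; this should be routine once the cardinality is frozen.
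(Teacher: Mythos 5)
Your proposal follows essentially the same route as the paper's proof: a pointwise bound $\Pr(i\in\calS(\bv,\beta))\le \beta^2/(1+\beta^2)$ (Lemma~\ref{lemma:term34}) plus Hoeffding/Chernoff and an $\epsilon$-net to bound $|\calS(\bv,\beta/2)|$ uniformly over the sphere, followed by the bounded-cardinality operator-norm estimate of Lemma~\ref{lemma:term32} (which is already the arbitrary-subset version of \cite[Theorem 5.7]{10.1093/imaiai/iay005}, so no adaptation from ``wedges'' is needed) together with $\sum_{i\in S}|\ba_i^*\bv|^2\le\bigl\|\sum_{i\in S}\ba_i\ba_i^*\bigr\|$. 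The only real difference is cosmetic: you control net leakage through the $\bv$-independent exceptional set $\{i:|\ba_i^*\bz|<\beta/n\}$, whereas the paper uses Lemma~\ref{lemma:term31} with the set $\{i:\|\ba_i\|/|\ba_i^*\bv|\ge c_2\}$; both are minor variants of the same perturbation argument, and your remaining issues (the exact Chernoff exponent for the exceptional set and attributing the $2n\exp(-n)$ term) are bookkeeping rather than gaps.
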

 \subsection{Proof of Lemma~\ref{lemma:term1}} \label{sec:term1}
\begin{proof}
The proof will be based on an $\epsilon$-net argument. In the first step, we will show that for any $\bv$ with $\|\bv\|=1$, the event in the LHS of \eqref{eq:term1_5} happens with high probability. Second, we will establish a perturbation bound on $\bv$ when the perturbation is smaller than $\epsilon$. Then a standard $\epsilon$-net argument will be applied.

First, we note that
\begin{align}\label{eq:term1_0}
  &\frac{(\ba_i^*\bz\bv^*\ba_i+\ba_i^*\bv\bz^*\ba_i)^2}{2|\ba_i^*\bz|^2}=  \frac{(\ba_i^*\bz\bv^*\ba_i+\ba_i^*\bv\bz^*\ba_i)^2}{2|\ba_i^*\bz|^2\|P_{\Sp(\bv,\bz)}(\ba_i)\|^2}  \|P_{\Sp(\bv,\bz)}(\ba_i)\|^2
  \\= &\frac{(\bb_i^*\hat{\bz}\hat{\bv}^*\bb_i+\bb_i^*\hat{\bv}\hat{\bz}^*\bb_i)^2}{2|\bb_i^*\hat{\bz}|^2} \|P_{\Sp(\bv,\bz)}(\ba_i)\|^2,\nonumber
\end{align}
where $\bb_i, \hat{\bz}, \hat{\bv}\in\bbC^2$ are defined by $\bb_i=\frac{P_{\Sp(\bv,\bz)}(\ba_i)}{\|P_{\Sp(\bv,\bz)}(\ba_i)\|}$, $\hat{\bz}=P_{\Sp(\bv,\bz)}\bz$, and $\hat{\bv}=P_{\Sp(\bv,\bz)}\bv$. Here $\bb_i$ is sampled uniformly and independently from a unit circle in $\bbC^2$, and it is independent of $\|P_{\Sp(\bv,\bz)}(\ba_i)\|$.

Considering that $P_{\Sp(\bv,\bz)}(\ba_i)$ is a projection of a random unit vector from $\bbC^n$ to $\bbC^2$, there exists $c_1>0$ such that (in fact, one can verify it with $c_1=0.8$)
\[
\Pr\left(\|P_{\Sp(\bv,\bz)}(\ba_i)\|^2\geq \frac{c_1}{n}\right)\geq \frac{3}{4}.
\]
Then Hoeffding's inequality suggests that
\begin{equation}\label{eq:term1_1}
P\left(\sum_{i=1}^mI\left( \|P_{\Sp(\bv,\bz)}(\ba_i)\|^2\geq \frac{c_1}{n}\right)\geq \frac{m}{2}\right)\geq 1-\exp\Big(-\frac{m}{8}\Big).
\end{equation}
For the component $\frac{(\bb_i^*\hat{\bz}\hat{\bv}^*\bb_i+\bb_i^*\hat{\bv}\hat{\bz}^*\bb_i)^2}{2|\bb_i^*\hat{\bz}|^2}$, note that $\bb_i$ is distributed uniformly on a circle in $\bbC^2$, so WLOG we may assume that $\hat{\bz}=[1,0]$ and $\hat{\bv}=[\cos \theta, \sin\theta]$, and then
\begin{align*}
 & \bb_i^*\hat{\bz}\hat{\bv}^*\bb_i+\bb_i^*\hat{\bv}\hat{\bz}^*\bb_i=2\mathrm{Re}(  \bb_i^*\hat{\bz}\hat{\bv}^*\bb_i)=2\mathrm{Re}(\cos\theta|\bb_{i,1}|^2+\sin\theta\bb_{i,1}^*\bb_{i,2})\\=&2\cos\theta|\bb_{i,1}|^2 +2\cos\theta\sin\theta \mathrm{Re}(\bb_{i,1}^*\bb_{i,2}),
\end{align*}
and applying $|\bb_i^*\hat{\bz}|^2=|\bb_{i,1}|^2$,
\begin{equation}\label{eq:temp1}
  \frac{(\bb_i^*\hat{\bz}\hat{\bv}^*\bb_i+\bb_i^*\hat{\bv}\hat{\bz}^*\bb_i)^2}{2|\bb_i^*\hat{\bz}|^2}=\cos^2\theta |\bb_{i,1}|^2+2\cos\theta\sin\theta  \mathrm{Re}(\bb_{i,1}^*\bb_{i,2})+\sin^2\theta \frac{(\mathrm{Re}(\bb_{i,1}^*\bb_{i,2}))^2}{|\bb_{i,1}|^2}.
\end{equation}
By the symmetry of the distribution of $\bb_{i,2}$, when $\bb_{i,1}$ is fixed, $\Expect_{\bb_{i,2}} {\mathrm{Re}(\bb_{i,1}^*\bb_{i,2})}=0$ and $\Expect_{\bb_{i,2}}\frac{(\mathrm{Re}(\bb_{i,1}^*\bb_{i,2}))^2}{|\bb_{i,1}|^2}=\frac{1}{2}\Expect_{\bb_{i,2}} \frac{|\bb_{i,1}^*\bb_{i,2}|^2}{|\bb_{i,1}|^2}=\frac{1}{2}\Expect_{\bb_{i,2}}|\bb_{i,2}|^2=\frac{1}{2}(1-|\bb_{i,1}|^2)$. Combining it with $\Expect |\bb_{i,1}|^2=1/2$ and \eqref{eq:temp1}, we have
\[
  \Expect\left[ \frac{(\bb_i^*\hat{\bz}\hat{\bv}^*\bb_i+\bb_i^*\hat{\bv}\hat{\bz}^*\bb_i)^2}{2|\bb_i^*\hat{\bz}|^2}\right]=\frac{1}{2}\cos^2\theta+\frac{1}{4}\sin^2\theta\geq \frac{1}{4}.
\]
Since for each $1\leq i\leq m$,
\[
 \frac{(\bb_i^*\hat{\bz}\hat{\bv}^*\bb_i+\bb_i^*\hat{\bv}\hat{\bz}^*\bb_i)^2}{2|\bb_i^*\hat{\bz}|^2}\leq 2|\bb_i^*\hat{\bv}|^2\leq 2,
\]
When the event in \eqref{eq:term1_1} holds,
\begin{equation}\label{eq:temp2}
\text{there exists a set $\sI$ of size $m/2$ such that for all $i\in\sI$, $\|P_{\Sp(\bv,\bz)}(\ba_i)\|^2\geq \frac{c_1}{n}$.}\end{equation} Hoeffding's inequality then gives the estimation
\begin{equation}\label{eq:term1_2}
\Pr\left(\frac{2}{m}\sum_{i\in\sI}\frac{(\bb_i^*\bz\bv^*\bb_i+\bb_i^*\bv\bz^*\bb_i)^2}{2|\bb_i^*\bz|^2}\geq \frac{1}{4}-t\right)\geq 1-\exp\left(-\frac{mt^2}{4}\right).
\end{equation}
Combining \eqref{eq:term1_0}, \eqref{eq:temp2}, and \eqref{eq:term1_2},
\begin{equation}\label{eq:term1_3}
  \Pr\left(\frac{2}{m}\sum_{i\in\sI}\frac{(\ba_i^*\bz\bv^*\ba_i+\ba_i^*\bv\bz^*\ba_i)^2}{2|\ba_i^*\bz|^2}\geq \Big(\frac{1}{4}-t\Big)\frac{c_1}{n}\right)\geq 1-\exp\left(-\frac{mt^2}{4}\right)-\exp\Big(-\frac{m}{8}\Big),
  \end{equation}
which leads to
\begin{equation}\label{eq:term1_4}
  \Pr\left(\sum_{i=1}^m\frac{(\ba_i^*\bz\bv^*\ba_i+\ba_i^*\bv\bz^*\ba_i)^2}{2|\ba_i^*\bz|^2}\geq \Big(\frac{1}{4}-t\Big)\frac{c_1m}{2n}\right)\geq  1- \exp\left(-\frac{mt^2}{4}\right)-\exp\Big(-\frac{m}{8}\Big).
\end{equation}

Second, we will establish a perturbation bound on $\bv$. For any $\bv'$ such that $\|\bv'-\bv\|\leq \epsilon$,
\begin{align*}
&  \sum_{i=1}^m\frac{(\ba_i^*\bz\bv^*\ba_i+\ba_i^*\bv\bz^*\ba_i)^2}{2|\ba_i^*\bz|^2}-\sum_{i=1}^m\frac{(\ba_i^*\bz\bv'^*\ba_i+\ba_i^*\bv'\bz^*\ba_i)^2}{2|\ba_i^*\bz|^2}\\=&4\sum_{i=1}^m\frac{\mathrm{Re}(\ba_i^*\bz(\bv+\bv')^*\ba_i)\mathrm{Re}(\ba_i^*\bz(\bv-\bv')^*\ba_i)}{2|\ba_i^*\bz|^2}\leq 4\epsilon\sum_{i=1}^m\|\ba_i\|^2=4m\epsilon.
\end{align*}

Combining it with \eqref{eq:term1_4}, Lemma~\ref{lemma:covering}, and a standard $\epsilon$-net argument, we have
\begin{align*}\nonumber
  &  \Pr\left(\min_{\|\bv\|=1}\frac{1}{m}\sum_{i=1}^m\frac{(\ba_i^*\bz\bv^*\ba_i+\ba_i^*\bv\bz^*\ba_i)^2}{2|\ba_i^*\bz|^2}\geq \frac{c_1m}{2n}\Big(\frac{1}{4}-t\Big)-4m\epsilon\right)\\\geq& 1-\left(\exp\left(-\frac{mt^2}{4}\right)+\exp\left(-\frac{m}{8}\right)\right)\left(1+\frac{2}{\epsilon}\right)^{2n}.
  \end{align*}
Set $t=\frac{1}{12}$ and $\epsilon=\frac{c_1}{96n}$, \eqref{eq:term1_5} and Lemma~\ref{lemma:term1} are proved.
\end{proof}

\subsection{Proof of Lemma~\ref{lemma:term3}} \label{sec:term3}
In this section, we will first present a few lemmas and their proof, and then prove Lemma~\ref{lemma:term3} in the end. In particular, we will estimate the expected value of $|\calS(\bv,\beta)|$ in Lemma~\ref{lemma:term34}; then we will investigate the perturbation of $|\calS(\bv,\beta)|$ when $\bv$ is perturbed in Lemma~\ref{lemma:term31}. Next, an $\epsilon$-net argument will be used to give a uniform upper bound on $|\calS(\bv,\beta)|$ for all $\bv$. Combining this uniform upper bound and Lemma~\ref{lemma:term32}, Lemma~\ref{lemma:term3} is proved.

\begin{lemma}\label{lemma:term34}
Fix $\bv$ with $\|\bv\|=1$ and assume $\beta\geq 1$, then for each $1\leq i\leq m$,  \[
   \Pr\{|\ba_i^*\bz|\leq \beta |\ba_i^*\bv|\}\leq \frac{\beta^2}{1+\beta^2}.
  \]
\end{lemma}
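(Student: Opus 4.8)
\textbf{Proof proposal for Lemma~\ref{lemma:term34}.}

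The plan is to reduce the event $\{|\ba_i^*\bz|\leq \beta|\ba_i^*\bv|\}$ to a statement about the two-dimensional random vector obtained by projecting $\ba_i$ onto $\Sp(\bv,\bz)$, exactly as in the first step of the proof of Lemma~\ref{lemma:term1}. Write $\bb = P_{\Sp(\bv,\bz)}(\ba_i)/\|P_{\Sp(\bv,\bz)}(\ba_i)\|\in\bbC^2$, which is uniform on the unit circle in $\bbC^2$ and independent of the norm factor; since $|\ba_i^*\bz| = \|P_{\Sp(\bv,\bz)}(\ba_i)\|\,|\bb^*\hat\bz|$ and similarly for $\bv$, the norm factor cancels and the event becomes $\{|\bb^*\hat\bz|\leq \beta|\bb^*\hat\bv|\}$, where $\hat\bz,\hat\bv$ are the (unit) coordinate representations of $\bz,\bv$ in an orthonormal basis of $\Sp(\bv,\bz)$. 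So it suffices to prove: for unit vectors $\hat\bz,\hat\bv\in\bbC^2$ and $\bb$ uniform on the unit sphere of $\bbC^2$, $\Pr\{|\bb^*\hat\bz|\leq \beta|\bb^*\hat\bv|\}\leq \beta^2/(1+\beta^2)$.

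For the two-dimensional computation I would use that a uniform point on the unit sphere of $\bbC^2$ can be written (after a unitary change of basis taking $\hat\bz$ to $[1,0]^*$) so that $\hat\bz=[1,0]^*$ and $\hat\bv=[\cos\theta, e^{\iu\phi}\sin\theta]^*$ for some angle $\theta\in[0,\pi/2]$. Then $|\bb^*\hat\bz|^2 = |\bb_1|^2$ and $|\bb^*\hat\bv|^2 = |\cos\theta\,\bb_1 + e^{-\iu\phi}\sin\theta\,\bb_2|^2$. Parametrize $|\bb_1|^2 = u$, so $u$ is uniform on $[0,1]$ (the squared modulus of one coordinate of a uniform complex unit vector in $\bbC^2$), with phases independent. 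The key inequality to extract is that, pointwise in the phases,
\[
\Pr\{|\bb^*\hat\bz|\leq \beta|\bb^*\hat\bv|\} \leq \Pr\{|\bb_1|\leq \beta\,(\text{projection of }\bb\text{ onto a fixed direction})\},
\]
and the worst case over directions $\hat\bv$ is when $\hat\bv\perp\hat\bz$, i.e. $\theta=\pi/2$, giving the event $\{|\bb_1|^2\leq \beta^2|\bb_2|^2\} = \{u\leq \beta^2(1-u)\} = \{u\leq \beta^2/(1+\beta^2)\}$, whose probability under $u\sim\mathrm{Unif}[0,1]$ is exactly $\beta^2/(1+\beta^2)$. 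I would justify the "worst case is orthogonal" claim by noting that the map $\bb\mapsto (|\bb^*\hat\bz|^2, |\bb^*\hat\bv|^2)$ and the constraint structure make the ratio $|\bb^*\hat\bz|^2/|\bb^*\hat\bv|^2$ stochastically largest precisely when the two directions are as misaligned as possible; alternatively one can just compute the probability as an explicit function of $\theta$ and check it is monotone in $\theta$.

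The main obstacle is the last point: rigorously showing that $\theta=\pi/2$ is the extremal configuration, rather than hand-waving it. The cleanest route is probably a direct computation: condition on $u=|\bb_1|^2$ and the relative phase $\psi$ between $\bb_1$ and $\bb_2$, write $|\bb^*\hat\bv|^2 = u\cos^2\theta + (1-u)\sin^2\theta + 2\sqrt{u(1-u)}\,\sin\theta\cos\theta\cos\psi$, and integrate the indicator of $\{u \le \beta^2 |\bb^*\hat\bv|^2\}$ over $\psi$ and then $u$; since the bound $\beta^2/(1+\beta^2)$ is what one gets at $\theta=\pi/2$ and the cross term averages favorably, one shows the $\theta$-dependent probability never exceeds this value. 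Since the lemma only needs an upper bound (not equality), a slightly lossy but simpler argument — bounding $|\bb^*\hat\bv|^2 \le |\bb_1|^2 + |\bb_2|^2$-type manipulations won't suffice, so one genuinely needs the direction argument — should still close the proof; I expect the authors do the explicit $\theta$-integral or invoke rotational symmetry of $\bb$ to reduce to the orthogonal case by absorbing the $\hat\bz$-component of $\hat\bv$ into $\bb_1$.
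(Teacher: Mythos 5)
Your overall strategy is the same as the paper's: project onto $\Sp(\bv,\bz)$ to reduce to a two\-dimensional statement, argue that the orthogonal configuration $\bv\perp\bz$ is the worst case, and compute the orthogonal\-case probability exactly. Your two\-dimensional computation is correct and equivalent to the paper's: the paper represents $|\ba_i^*\bz|$ and $|\ba_i^*\bu|$ (for $\bu\perp\bz$) as moduli of independent complex Gaussians with density $xe^{-x^2/2}$ and integrates, while you use $|\bb_1|^2\sim\mathrm{Unif}[0,1]$ for $\bb$ uniform on the unit sphere of $\bbC^2$; both give exactly $\beta^2/(1+\beta^2)$ in the orthogonal case.

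The gap is precisely the step you flag as the ``main obstacle,'' and it cannot be closed in the form you propose: under the lemma's stated hypothesis $\beta\geq 1$, the orthogonal configuration is \emph{not} extremal. Take $\hat{\bv}=\hat{\bz}$ (i.e.\ $\theta=0$): the event $|\bb^*\hat{\bz}|\leq\beta|\bb^*\hat{\bv}|$ then holds with probability $1>\beta^2/(1+\beta^2)$, and for $\beta>1$ the same failure persists for every $\bv$ sufficiently close (in angle) to $\bz$. So the probability is not maximized at $\theta=\pi/2$ when $\beta\geq1$; the monotonicity in $\theta$ runs the wrong way, and no averaging of the cross term over the relative phase $\psi$ can rescue the claimed bound for an arbitrary fixed unit $\bv$. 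The reduction you want (replace $\bv$ by its component $\bu$ orthogonal to $\bz$) is legitimate only when the event forces $|\ba_i^*\bv|\geq|\ba_i^*\bz|$, i.e.\ when $\beta\leq 1$. For what it is worth, this is also exactly where the paper's own argument is shaky: its identity $|\ba_i^*\bv|^2=\cos^2\theta|\ba_i^*\bz|^2+\sin^2\theta|\ba_i^*\bu|^2$ silently drops the cross term $2\cos\theta\sin\theta\,\mathrm{Re}\bigl(e^{\iu(\eta_1-\eta_2)}\,\ba_i^*\bz\,\overline{\ba_i^*\bu}\bigr)$, and its chain $|\ba_i^*\bv|^2/|\ba_i^*\bz|^2\geq 1/\beta^2\geq 1$ implicitly assumes $\beta\leq1$, contradicting the stated hypothesis $\beta\geq1$. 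So your instinct that this is the hard point is right, but your proposed fix (explicit $\theta$\-integral showing $\theta=\pi/2$ is worst) proves something false in the stated regime; a correct proof must either restrict to $\beta\leq1$ (the regime in which the wedge bound on $\calS(\bv,\beta)$ is actually informative) and then carry out the conditional\-on\-moduli, average\-over\-phase argument to establish monotonicity in $\theta$, or else reformulate the lemma so that the comparison with the orthogonal case is valid.
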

\begin{proof}[Proof of Lemma~\ref{lemma:term34}]
First, we will show that it is sufficient to prove the case $\bv\perp\bz$. Assume $\bv\not\perp\bz$, then there exists $\bu$ such that $\|\bu\|=1$, $\bu\perp\bz$, $\bv\in\Sp(\bu,\bz)$. WLOG assume that $\bv=e^{i\eta_1}\cos\theta\bz+e^{i\eta_2}\sin\theta\bu$, then we have $|\ba_i^*\bv|^2=\cos^2\theta|\ba_i^*\bz|^2+\sin^2\theta|\ba_i^*\bu|^2$ and
\begin{equation}\label{eq:buv}
\frac{|\ba_i^*\bv|^2}{|\ba_i^*\bz|^2} = \frac{\cos^2\theta|\ba_i^*\bz|^2+\sin^2\theta|\ba_i^*\bu|^2}{|\ba_i^*\bz|^2}=\cos^2\theta+\sin^2\theta\frac{|\ba_i^*\bu|^2}{|\ba_i^*\bz|^2}.
\end{equation}
Assuming that  $i\in \calS(\bv,\beta)$, then $\frac{|\ba_i^*\bv|^2}{|\ba_i^*\bz|^2}\geq \frac{1}{\beta^2}\geq 1$ and \eqref{eq:buv} implies
\[
  \frac{|\ba_i^*\bu|^2}{|\ba_i^*\bz|^2}\geq \frac{|\ba_i^*\bv|^2}{|\ba_i^*\bz|^2},
\]
so $i$ is also in $\calS(\bu,\beta)$. Therefore, $\calS(\bv,\beta)\subseteq \calS(\bu,\beta)$, and it is sufficient to prove Lemma~\ref{lemma:term34} for the case $\bv\perp\bz$.

Note that $\Pr\{|\ba_i^*\bz|\leq \beta |\ba_i^*\bv|\}$ does not change with the scaling of $\ba_i$, so we may assume that $\ba_i$ are sampled from complex Gaussian distribution $CN(0,\sqrt{2}\bI)$, i.e., each real and imaginary component are sampled from $N(0,1)$. With $\bv\perp\bz$,
\[\Pr\left(\ba_i^*\bz|\leq \beta |\ba_i^*\bv|\right)=\Pr\left(\sqrt{g_1^2+g_2^2}\leq \beta \sqrt{g_3^2+g_4^2}\right),\] assuming that $g_1,g_2,g_3,g_4$ are i.i.d. sampled from $N(0,1)$.  By calculation, both  $\sqrt{g_1^2+g_2^2}$ and $\sqrt{g_3^2+g_4^2}$ have the probability density function $xe^{-x^2/2}$ with $x\geq 0$. Therefore,
\begin{align*}
&\Pr\left(\sqrt{g_1^2+g_2^2}\leq \beta \sqrt{g_1^2+g_2^2}\right)=\int_{x=0}^\infty xe^{-x^2/2}\int_{y=\frac{x}{\beta}}^\infty ye^{-y^2/2}\di y\di x \\=& \int_{x=0}^\infty xe^{-x^2/2}e^{-x^2/2\beta^2}\di x=\frac{\beta^2}{1+\beta^2},
\end{align*}
and Lemma~\ref{lemma:term34} is then proved.
\end{proof}

\begin{lemma}\label{lemma:term31}
For any $\bv, \bv'\in\bbC^n$ with $\|\bv-\bv'\|\leq\epsilon$, we have
\[
\Big|\calS\Big(\bv',\frac{\beta}{1-\epsilon c_2}\Big)\Big|\leq |\calS(\bv,\beta)| + |\{1\leq i\leq m: \|\ba_i\|/|\ba_i^*\bv|\geq c_2\}|.
\]
\end{lemma}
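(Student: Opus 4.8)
The plan is to show the contrapositive at the level of individual indices: I will argue that if an index $i$ belongs to $\calS(\bv',\beta/(1-\epsilon c_2))$ but not to the "bad" set $\{i:\|\ba_i\|/|\ba_i^*\bv|\ge c_2\}$, then $i$ must belong to $\calS(\bv,\beta)$. Summing this inclusion over $i$ gives exactly the claimed cardinality bound.

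First I would fix such an index $i$, so that $\|\ba_i\|/|\ba_i^*\bv| < c_2$, i.e. $|\ba_i^*\bv| > \|\ba_i\|/c_2$; equivalently $c_2|\ba_i^*\bv| > \|\ba_i\|$. The key estimate is the elementary inequality
\[
|\ba_i^*\bv| \;\ge\; |\ba_i^*\bv'| - |\ba_i^*(\bv-\bv')| \;\ge\; |\ba_i^*\bv'| - \|\ba_i\|\,\|\bv-\bv'\| \;\ge\; |\ba_i^*\bv'| - \epsilon\|\ba_i\|,
\]
using Cauchy--Schwarz and $\|\bv-\bv'\|\le\epsilon$. Now substitute the bound $\|\ba_i\| < c_2|\ba_i^*\bv|$ from the "not bad" assumption into the right-hand side: this yields $|\ba_i^*\bv| \ge |\ba_i^*\bv'| - \epsilon c_2|\ba_i^*\bv|$, hence $(1+\epsilon c_2)|\ba_i^*\bv| \ge |\ba_i^*\bv'|$. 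A slightly cleaner route, matching the constant $1-\epsilon c_2$ in the statement, is to run the triangle inequality the other way: $|\ba_i^*\bv'| \ge |\ba_i^*\bv| - \epsilon\|\ba_i\| \ge |\ba_i^*\bv| - \epsilon c_2|\ba_i^*\bv| = (1-\epsilon c_2)|\ba_i^*\bv|$, so $|\ba_i^*\bv'|\ge (1-\epsilon c_2)|\ba_i^*\bv|$.

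Next, suppose in addition $i\in\calS(\bv',\beta/(1-\epsilon c_2))$, i.e. $\frac{\beta}{1-\epsilon c_2}|\ba_i^*\bv'| \ge |\ba_i^*\bz|$. Chaining with the previous display gives
\[
|\ba_i^*\bz| \;\le\; \frac{\beta}{1-\epsilon c_2}\,|\ba_i^*\bv'| \;\le\; \frac{\beta}{1-\epsilon c_2}\cdot\frac{|\ba_i^*\bv'|}{1-\epsilon c_2}\ \text{?}
\]
— here I need to be careful: the correct chaining is $|\ba_i^*\bz| \le \frac{\beta}{1-\epsilon c_2}|\ba_i^*\bv'|$ and $|\ba_i^*\bv'|\ge(1-\epsilon c_2)|\ba_i^*\bv|$ point the inequalities in compatible directions only if I instead bound $|\ba_i^*\bv'|\le(1+\epsilon c_2)^{-1}\cdots$; so the cleanest version uses $|\ba_i^*\bv'|\le |\ba_i^*\bv|+\epsilon\|\ba_i\|\le(1+\epsilon c_2)|\ba_i^*\bv|$ together with the normalization $1/(1-\epsilon c_2)\le 1+\epsilon c_2+O(\epsilon^2)$, or one simply adopts the convention in the statement that $\beta/(1-\epsilon c_2)$ is defined precisely so that $\frac{\beta}{1-\epsilon c_2}|\ba_i^*\bv'|\ge|\ba_i^*\bz|$ forces $\beta|\ba_i^*\bv|\ge|\ba_i^*\bz|$ via $|\ba_i^*\bv|\ge(1-\epsilon c_2)|\ba_i^*\bv'|$ — wait, that last inequality is false in general. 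The honest approach: from $|\ba_i^*\bv'|\le(1+\epsilon c_2)|\ba_i^*\bv|$ (shown above under the "not bad" assumption) we get $|\ba_i^*\bz|\le\frac{\beta(1+\epsilon c_2)}{1-\epsilon c_2}|\ba_i^*\bv|$, which is $\le$ some constant times $\beta|\ba_i^*\bv|$; to land exactly on $\beta|\ba_i^*\bv|\ge|\ba_i^*\bz|$ one should define the left-hand threshold as $\beta\cdot\frac{1-\epsilon c_2}{1+\epsilon c_2}$ rather than $\beta/(1-\epsilon c_2)$. I would therefore follow the paper's exact normalization and verify the single-index implication $i\in\calS(\bv',\beta')\setminus(\text{bad set})\Rightarrow i\in\calS(\bv,\beta)$ with whatever $\beta'$ the triangle-inequality bookkeeping actually produces, and then sum.

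Finally, with the index-wise inclusion
\[
\calS\!\Big(\bv',\tfrac{\beta}{1-\epsilon c_2}\Big)\ \subseteq\ \calS(\bv,\beta)\ \cup\ \{1\le i\le m:\ \|\ba_i\|/|\ba_i^*\bv|\ge c_2\}
\]
established, the cardinality bound is immediate by the union bound on set sizes, $|A\cup B|\le|A|+|B|$. The main obstacle here is purely bookkeeping: getting the constants $1-\epsilon c_2$ (versus $1+\epsilon c_2$, or their ratio) lined up so that the triangle inequality $|\ba_i^*\bv'|\ge|\ba_i^*\bv|-\epsilon\|\ba_i\|$, combined with the "not bad" hypothesis $\|\ba_i\|< c_2|\ba_i^*\bv|$, produces exactly the threshold appearing in the statement; there is no analytic difficulty beyond Cauchy--Schwarz.
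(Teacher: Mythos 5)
Your proposal does not close. You correctly reduce the lemma to a single-index implication, but the triangle-inequality bookkeeping in your chosen direction only yields $|\ba_i^*\bv'|\le(1+\epsilon c_2)|\ba_i^*\bv|$ off the bad set, hence $i\in\calS\big(\bv,\beta\tfrac{1+\epsilon c_2}{1-\epsilon c_2}\big)$ rather than $i\in\calS(\bv,\beta)$, and you end by deferring to ``whatever $\beta'$ the bookkeeping actually produces.'' That is a genuine gap, not a cosmetic one: the pointwise implication $i\in\calS\big(\bv',\tfrac{\beta}{1-\epsilon c_2}\big)\setminus\{\,i:\|\ba_i\|/|\ba_i^*\bv|\ge c_2\,\}\Rightarrow i\in\calS(\bv,\beta)$ is false in general. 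One can take $\ba_i$ nearly parallel to $\bv'-\bv$ with phases aligned so that $|\ba_i^*\bv'|$ is close to $|\ba_i^*\bv|+\epsilon\|\ba_i\|$ while $i$ stays outside the bad set (possible when $c_2>\beta$, as in the application where $c_2=1/2\epsilon$), and then choose $\bz$ with $\beta|\ba_i^*\bv|<|\ba_i^*\bz|\le\tfrac{\beta}{1-\epsilon c_2}|\ba_i^*\bv'|$; such an index lies in the left-hand set but in neither set on the right. So no bookkeeping in your direction lands on the stated constant: the threshold you can legitimately put on the left in that direction is $\beta/(1+\epsilon c_2)$, exactly as your own ``honest approach'' paragraph suggests, and a complete proof would have to commit to that (or an equivalent) statement rather than leave it open.

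For comparison, the paper's one-line proof runs the implication the other way: it takes $i\in\calS(\bv,\beta)$ with $\|\ba_i\|/|\ba_i^*\bv|\le c_2$, writes $|\ba_i^*\bv|\le|\ba_i^*\bv'|+\epsilon\|\ba_i\|\le|\ba_i^*\bv'|+\epsilon c_2|\ba_i^*\bv|$, hence $(1-\epsilon c_2)|\ba_i^*\bv|\le|\ba_i^*\bv'|$ and $|\ba_i^*\bz|\le\beta|\ba_i^*\bv|\le\tfrac{\beta}{1-\epsilon c_2}|\ba_i^*\bv'|$. This is where the factor $1/(1-\epsilon c_2)$ arises cleanly, but note it establishes the inclusion $\calS(\bv,\beta)\setminus\{\,i:\|\ba_i\|/|\ba_i^*\bv|\ge c_2\,\}\subseteq\calS\big(\bv',\tfrac{\beta}{1-\epsilon c_2}\big)$, i.e.\ the cardinality inequality with the two $\calS$-terms interchanged relative to the displayed statement. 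So the obstruction you ran into is in fact a correct diagnosis of a mismatch between the lemma as stated and its intended proof (the roles of $\bv$ and $\bv'$, equivalently the direction of the inequality, are swapped); either corrected version suffices for the $\epsilon$-net step in Lemma~\ref{lemma:term3} after adjusting absolute constants. To make your write-up a proof, prove the index-wise implication with left-hand threshold $\beta/(1+\epsilon c_2)$ (your direction), or prove the paper's inclusion and use it with the roles of net point and arbitrary point assigned so that the bad set is evaluated at the net point, where the probabilistic bound \eqref{eq:temp6} is available.
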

\begin{proof}[Proof of Lemma~\ref{lemma:term31}]
If $i\in\calS(\bv,\beta)$ and $\|\ba_i^*\|/|\ba_i^*\bv|\leq c_2$, then
\[
|\ba_i^*\bv|\leq  |\ba_i^*\bv'|+\epsilon\|\ba_i^*\|\leq  |\ba_i^*\bv'|+\epsilon c_2|\ba_i^*\bv|,
\]
so $\|\ba_i^*\bz\|\leq \beta|\ba_i^*\bv|\leq
\frac{\beta}{1-\epsilon c_2}|\ba_i^*\bv'|$.
\end{proof}

The following is a restatement of \cite[Theorem 5.7]{10.1093/imaiai/iay005}. While the statement is proved for the real-valued case, it can be generalized to the complex-valued case by treating the real component and the imaginary component separately.
\begin{lemma}\label{lemma:term32}
Let $0<\delta<1/2$, $0<c_3<1$, suppose $m\geq \max(n,\log(1/\delta)/c_3)$, then with probability at least $1-2\delta$, for any set $\calS\in\{1,\cdots,m\}$ with $|\calS|\leq c_3 m$, we have
\[
\Big\|\sum_{i\in\calS}\ba_i\ba_i^*\Big\|  \leq C \sqrt{c_3}\frac{m}{n}.
\]
\end{lemma}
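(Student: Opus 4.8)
The plan is to deduce this statement from the real-valued \cite[Theorem 5.7]{10.1093/imaiai/iay005} that it restates, via the standard identification of $\bbC^n$ with $\reals^{2n}$, rather than to reprove a restricted-isometry estimate from scratch. Fix the $\reals$-linear isometry $\phi\colon\bbC^n\to\reals^{2n}$, $\phi(\bw)=(\real\bw;\imag\bw)$, which maps the complex unit sphere bijectively onto the real unit sphere $S^{2n-1}$, and let $J=\left(\begin{smallmatrix}\b0 & \bI\\ -\bI & \b0\end{smallmatrix}\right)\in\reals^{2n\times 2n}$ be the orthogonal matrix representing multiplication by $\iu$. A direct computation gives, for all $\ba,\bv\in\bbC^n$, that $\real(\ba^*\bv)=\langle\phi(\ba),\phi(\bv)\rangle$ and $\imag(\ba^*\bv)=-\langle J\phi(\ba),\phi(\bv)\rangle$, hence
\[
|\ba^*\bv|^2=\langle\phi(\ba),\phi(\bv)\rangle^2+\langle J\phi(\ba),\phi(\bv)\rangle^2 .
\]

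Write $\bu_i:=\phi(\ba_i)$; these are i.i.d.\ uniform on $S^{2n-1}$ (the uniform law on the complex sphere is the unique $U(n)$-invariant one, $U(n)$ acts transitively on $S^{2n-1}$, and the rotation-invariant law on $S^{2n-1}$ is $U(n)$-invariant, so the push-forward under $\phi$ is the uniform law on $S^{2n-1}$). Since $\phi$ carries the complex unit sphere onto $S^{2n-1}$, for every index set $\calS$ one has the exact identity
\[
\Bigl\|\sum_{i\in\calS}\ba_i\ba_i^*\Bigr\|=\max_{\|\bv\|=1}\sum_{i\in\calS}|\ba_i^*\bv|^2=\Bigl\|\sum_{i\in\calS}\bigl(\bu_i\bu_i^\top+(J\bu_i)(J\bu_i)^\top\bigr)\Bigr\| ,
\]
the last norm being the operator norm on $\reals^{2n}$. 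Now $\sum_{i\in\calS}(J\bu_i)(J\bu_i)^\top=J\bigl(\sum_{i\in\calS}\bu_i\bu_i^\top\bigr)J^\top$, and $J$ orthogonal forces this matrix to have the same norm as $\sum_{i\in\calS}\bu_i\bu_i^\top$; the triangle inequality then gives $\|\sum_{i\in\calS}\ba_i\ba_i^*\|\le 2\,\|\sum_{i\in\calS}\bu_i\bu_i^\top\|$. This is the precise content of ``treating the real and imaginary components separately.'' Applying the unit-sphere form of \cite[Theorem 5.7]{10.1093/imaiai/iay005} in dimension $2n$ to the i.i.d.\ vectors $\bu_1,\dots,\bu_m$ gives, with probability at least $1-2\delta$, the bound $\|\sum_{i\in\calS}\bu_i\bu_i^\top\|\le C'\sqrt{c_3}\,m/(2n)$ simultaneously for all $\calS$ with $|\calS|\le c_3 m$; multiplying by $2$ yields the claimed $C\sqrt{c_3}\,m/n$ with $C=C'$.

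The only genuinely complex-specific point, and hence the one I would be most careful about, is the correlation between the ``real-part matrix'' $\sum_{i\in\calS}\bu_i\bu_i^\top$ and the ``imaginary-part matrix'' $\sum_{i\in\calS}(J\bu_i)(J\bu_i)^\top$: a careless argument might union-bound over both, or worry that the imaginary part contributes fresh dimension-$2n$ randomness beyond what the real theorem tolerates. The observation that the second matrix is an orthogonal conjugate of the first — so it has the same norm and introduces no new randomness — is exactly what makes a single invocation of the real result, plus the triangle inequality, suffice. The remaining discrepancy, that the cited theorem applied in dimension $2n$ wants $m\ge\max(2n,\log(1/\delta)/c_3)$ whereas the lemma hypothesizes $m\ge\max(n,\log(1/\delta)/c_3)$, is harmless bookkeeping absorbed into the absolute constants, given that $m,n$ are taken large. (A self-contained proof avoiding the citation is possible but would need to re-import the metric-entropy/chaining estimate underlying \cite[Theorem 5.7]{10.1093/imaiai/iay005}: a naive $\epsilon$-net over the sphere combined with a union bound over the support sets of size $\le c_3 m$ forces $\epsilon$ small enough that the net entropy is of order $n\log n$, which is too lossy to reach the stated hypothesis, so the reduction is clearly the efficient route.)
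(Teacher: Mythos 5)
Your reduction is correct and is essentially the paper's own route: the paper gives no detailed proof of this lemma, only the remark that the real-valued \cite[Theorem 5.7]{10.1093/imaiai/iay005} extends ``by treating the real component and the imaginary component separately,'' and your argument via $\phi:\bbC^n\to\reals^{2n}$, the identity $|\ba^*\bv|^2=\langle\phi(\ba),\phi(\bv)\rangle^2+\langle J\phi(\ba),\phi(\bv)\rangle^2$, the observation that the imaginary-part matrix is the orthogonal conjugate $J\bigl(\sum_{i\in\calS}\bu_i\bu_i^\top\bigr)J^\top$, and the factor-$2$ triangle inequality is a faithful implementation of exactly that remark. The one caveat you already note---that invoking the theorem in dimension $2n$ formally requires $m\geq 2n$ rather than $m\geq n$---is immaterial in context, since the paper assumes $m\geq C_0 n\log n$ with $m,n$ large.
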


\begin{proof}[Proof of Lemma~\ref{lemma:term3}]
  To investigate the probability that $\|\ba_i^*\|/|\ba_i^*\bv|\geq c_2$, WLOG we may assume scale $\ba_i\in\bbC^n$ and assume that the real and the complex component of each element of $\ba_i$ is sampled from $N(0,1)$. Then  $|\ba_i^*\bv|$ has a p.d.f. of $f(x)=xe^{-x^2/2}$ for $x\geq 0$, and \begin{equation}\label{eq:temp4}
  \Pr(|\ba_i^*\bv|\geq t)=\exp(-t^2/2).
  \end{equation}
  In addition, $\|\ba_i^*\|^2$ has the same distribution of $\chi^2$ distribution with a degree of freedom $2n$. As a result, the tail bound of the $\chi^2$ distribution~\cite[Example 2.11]{wainwright2019high} implies (using one-sided inequality, apply $t=1$ and replace $n$ by $2n$ from their notation)
\begin{equation}\label{eq:temp5}
  \Pr(\|\ba_i\|^2\geq 4n)\leq \exp(-n/4).
  \end{equation}
  Combining \eqref{eq:temp4} with $t=\frac{2\sqrt{n}}{c_2}$ and \eqref{eq:temp5}, we have
  \[
  \Pr(\|\ba_i^*\|/|\ba_i^*\bv|\geq c_2 )\leq 1-\exp(-2n/c_2^2)+\exp(-n/4).
  \]
  Combining it with Lemma~\ref{lemma:term34}, we have that for each $1\leq i\leq m$,
  \[
  \Pr\left(\frac{\|\ba_i\|}{|\ba_i^*\bv|}\geq c_2\right)\leq    1-\exp\Big(-\frac{2n}{c_2^2}\Big)+\exp\Big(-\frac{n}{4}\Big).
  \]
  Applying Hoeffding inequality, for each $\bv$,
  \begin{align}\nonumber
  &\Pr\left(\Big|\Big\{1\leq i\leq m: \frac{\|\ba_i\|}{|\ba_i^*\bv|}\geq c_2\Big\}\Big|\geq m\Big(1-\exp\Big(-\frac{2n}{c_2^2})+\exp\Big(-\frac{n}{4}\Big) +t\Big)\right)\\\leq &\exp(-2mt^2).  \label{eq:temp6}
  \end{align}
  Similarly, applying Hoeffding inequality to Lemma~\ref{lemma:term34}, for each $\bv$,
  \begin{align}\label{eq:temp7}
  &\Pr\left(\Big|\calS(\bv,\beta)\Big\}\Big|\geq m\Big(\frac{\beta^2}{1+\beta^2} +t\Big)\right)\leq \exp(-2mt^2).
  \end{align}
Combining \eqref{eq:temp6}, \eqref{eq:temp7} with $c_2=1/2\epsilon$, then the standard $\epsilon$-net argument, Lemma~\ref{lemma:covering}, and Lemma~\ref{lemma:term31} imply
  \begin{align}\nonumber
 & \Pr\left(\max_{\|\bv\|=1}\Big|\calS\Big(\bv,\frac{\beta}{2}\Big)\Big|\geq m\Big(1-\exp(-8n\epsilon^2)+\exp\big(-\frac{n}{4}\big) +\frac{\beta^2}{1+\beta^2}+2t\Big)\right)\\\leq& 2\exp(-2mt^2)(1+2/\epsilon)^{2n}.  \label{eq:temp9}
  \end{align}
Let $\epsilon=1/n$, $\eta=\exp(-n\log n)$, $t=\beta^2/2$, $c_3=2\beta^2$, note that $1-\exp(-8n\epsilon^2)+\exp\big(-\frac{n}{4}\big)\rightarrow 0$ as $n\rightarrow\infty$, Lemma~\ref{lemma:term3} is then proved using Lemma~\ref{lemma:term32} and the fact that $\Big\|\sum_{i\in\calS}\ba_i\ba_i^*\Big\|\geq \sum_{i\in\calS}|\ba_i^*\bv|^2$.
  \end{proof}

\section{The unitary model}\label{sec:step4}
In this section, we will show that under the unitary model, $L$ defined in \eqref{eq:assumption} and \eqref{eq:define_L} can be estimated and has a lower bound as follows':
\begin{thm}\label{thm:step4}
Under the unitary model, there exists $c_0$ and $\alpha$ such that $2c_0\alpha<1$, and
\begin{equation}\label{eq:unitary1}
L=\frac{n}{m}\min_{\|\bv\|=1} \left\{\!\!\frac{1}{2}\!\sum_{i=1}^m\!\frac{(\ba_i^*\bz\bv^*\!\ba_i\!\!+\!\!\ba_i^*\bv\bz^*\!\ba_i\!)^2}{2|\ba_i^*\bz|^2}\!-\! \frac{6}{\alpha\!-\!1}\!\!\sum_{i=1}^m|\ba_i^*\bv|^2\!-\!(2\!+\!4\alpha)\!\!\!\!\!\!\!\!\sum_{i\in\calS(\bv,c_0\alpha)}\!\!\!\!\!\!\!\!|\ba_i^*\bv|^2\!\!\right\}.
\end{equation}
is bounded from below with high probability, that is, $L>c$ for some $c>0$ with probability $1-Cn\exp(-Cn)$.
\end{thm}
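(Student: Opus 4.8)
The plan is to re-run the three-term decomposition of $L$ behind Theorem~\ref{thm:main3}, now with the block-orthogonal inputs, re-deriving the unitary-model analogues of Lemmas~\ref{lemma:term2},~\ref{lemma:term1} and~\ref{lemma:term3}. The one structural fact I would lean on throughout is that each individual column $\ba_i$ is still marginally uniform on the unit sphere, so \emph{all} the per-index computations of Sections~\ref{sec:term1} and~\ref{sec:term3} carry over verbatim --- the reduction to $\Sp(\bv,\bz)$ in \eqref{eq:term1_0}, the $\bv\perp\bz$ reduction and the explicit Gaussian computation in Lemma~\ref{lemma:term34}, and the tail bounds for $|\ba_i^*\bv|$ and $\|\ba_i\|$. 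The only new ingredient is the dependence among the $n$ columns inside one block $\bU_k=[\ba_{(k-1)n+1},\dots,\ba_{kn}]$, with the exact constraint $\bU_k^*\bU_k=\bI$; this I would handle blockwise, exploiting the independence of the $K=m/n$ blocks.

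The middle (subtracted) term is now deterministic and in fact improves: orthogonality gives $\sum_{i=(k-1)n+1}^{kn}|\ba_i^*\bv|^2=\|\bv\|^2=1$ for every unit $\bv$, so $\sum_{i=1}^m|\ba_i^*\bv|^2=m/n$ with no randomness and no $\epsilon$-net, and likewise $\sum_{i=1}^m\ba_i\ba_i^*=\tfrac{m}{n}\bI$; this replaces Lemma~\ref{lemma:term2}. For the first term I would keep the per-index lower bound of Lemma~\ref{lemma:term1} and replace its Hoeffding-in-$i$ step by a blockwise concentration: the $k$-th block sum $\sum_{i\in\text{block }k}\tfrac{(\ba_i^*\bz\bv^*\ba_i+\ba_i^*\bv\bz^*\ba_i)^2}{4|\ba_i^*\bz|^2}$ lies in $[0,1]$ (again by orthogonality), and as a function of the Haar matrix $\bU_k$ it concentrates around its mean with an exponentially small tail, via the Doob/Gram--Schmidt martingale obtained by exposing the columns of $\bU_k$ one at a time (the increments being $O(|\ba_i^*\bv|^2)$, which are square-summable to $1$). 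Summing over the $K$ independent blocks and $\epsilon$-netting $\bv$ via Lemma~\ref{lemma:covering} then yields the analogue of \eqref{eq:term1_5}.

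For the wedge term $(2+4\alpha)\sum_{i\in\calS(\bv,c_0\alpha)}|\ba_i^*\bv|^2$ I would mirror the two stages of Lemma~\ref{lemma:term3}. First, a uniform count bound $|\calS(\bv,c_0\alpha)|\le c_3m$ with $c_3\asymp (c_0\alpha)^2$: the expectation estimate of Lemma~\ref{lemma:term34} holds as stated, $|\calS(\bv,\beta)\cap\text{block }k|$ is a bounded function of $\bU_k$ that concentrates via the same column-exposure martingale, and a union over the $K$ blocks together with the purely deterministic perturbation Lemma~\ref{lemma:term31} and an $\epsilon$-net give the uniform-in-$\bv$ statement. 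Second, the orthogonal-model substitute for Lemma~\ref{lemma:term32}: with high probability $\|\sum_{i\in\calS}\ba_i\ba_i^*\|\le C\sqrt{c_3}\,\tfrac mn$ for \emph{every} $\calS$ with $|\calS|\le c_3m$. Here I would write $\sum_{i\in\calS}\ba_i\ba_i^*=\sum_k\bU_k\bP_{T_k}\bU_k^*$ with $\bP_{T_k}$ a coordinate projection of rank $|T_k|$ and $\Expect\bU_k\bP_{T_k}\bU_k^*=\tfrac{|T_k|}{n}\bI$, apply matrix Chernoff/Bernstein across the $K$ independent blocks, and absorb the $\exp\!\big(O(c_3m\log(1/c_3))\big)$ union bound over $\calS$. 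Combining the three estimates then gives, exactly as in \eqref{eq:main3}, a bound $L\ge c_1/c-\tfrac{6}{\alpha-1}-(2+4\alpha)C\sqrt 2\,c_0\alpha$, and the choices of large $\alpha$ and small $c_0$ from the proof of Theorem~\ref{thm:main} make this a positive constant, with total failure probability $Cn\exp(-cn)$.

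The delicate point --- and where the extra hypothesis $\sqrt n>\log^2 m$ is consumed --- is that both nontrivial concentration steps must beat an $\epsilon$-net over the unit sphere of cardinality $\exp(\Theta(n\log n))$, while the only source of independence is the $K=m/n\gtrsim\log n$ blocks; hence the per-block, per-direction estimate must already be of order $\exp(-\Theta(n))$, and obtaining this for functionals of a single Haar matrix is subtle because they are not globally Lipschitz (the phase $\arg(\ba_i^*\bz)$ in the first term, and the ``wedge'' $\calS(\bv,\cdot)$ in the third). For the wedge this is compounded by the fact that $\calS(\bv,c_0\alpha)$ depends on the random $\bU_k$, so one cannot condition on it without destroying the Haar law; the bound must hold uniformly over all admissible $\calS$, and the VC/chaining argument used for \cite[Theorem~5.7]{10.1093/imaiai/iay005} does not transfer since the complex wedge is not a half-space --- the very obstruction already flagged for the Gaussian case in Section~\ref{sec:term3}. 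I expect this step to be the crux of the proof.
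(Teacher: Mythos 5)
Your plan is correct where it overlaps with the paper (the middle term is deterministically $\frac{6}{\alpha-1}\frac{m}{n}$ by block orthogonality, and marginal uniformity of each column does preserve the per-index estimates such as Lemma~\ref{lemma:term34}), but it has genuine gaps exactly at the two probabilistic estimates, and the second gap is where you yourself admit the crux is unresolved. For the first term, the blockwise Doob/column-exposure martingale with increments ``$O(|\ba_i^*\bv|^2)$'' is not justified: revealing column $i$ of $\bU_k$ changes the conditional law of \emph{all} later columns, so the $i$-th martingale increment is not controlled by the $i$-th summand alone; and since each block sum is an order-one quantity, concentration across only the $K=m/n$ independent blocks gives tails no better than $\exp(-cK)$, which cannot beat the $\exp(\Theta(n\log n))$-size net unless you prove a per-block lower bound holding with probability $1-\exp(-cn)$ --- precisely the step you leave open. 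The paper resolves this by a different mechanism: it first projects onto $\Sp(\bv,\bz)$, so each block becomes a Haar-distributed $n\times 2$ partial isometry $[\tilde{\ba}_i]$, and then couples it to an i.i.d.\ Gaussian $n\times 2$ matrix through $\bb_i=\tilde{\ba}_i\sqrt{\bS_k}$, controlling $\|\sqrt{\bS_k}/\sqrt{n}-\bI\|\leq t$ with $t\sim\log(m/n)/\sqrt{n}$ uniformly over the $K$ blocks (this, and only this, is where the hypothesis $\sqrt{n}>\log^2 m$ enters --- not the net-versus-blocks tradeoff you describe). After the coupling, both \eqref{eq:singlev} and the wedge estimate are plain Hoeffding bounds for i.i.d.\ data at scale $\exp(-Cm)$, which does beat the net.

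The second gap is in your third-term plan. The proposed unitary analogue of Lemma~\ref{lemma:term32} cannot work as stated: for a full $n\times n$ unitary block, $\bU_k\bP_{T_k}\bU_k^*$ is an orthogonal projection with norm exactly $1$, so for fixed $T_k$ there is no within-block concentration left to exploit, and matrix Chernoff across only $K$ independent blocks yields failure probability at best of order $\exp\big(-O(\sqrt{c_3}\log(1/c_3)\,K)\big)$, which is swamped by the union bound $\exp\big(\Theta(c_3 m\log(1/c_3))\big)=\exp\big(\Theta(c_3 nK\log(1/c_3))\big)$ over subsets $\calS$ of size $c_3m$; the corresponding argument in \cite[Theorem 5.7]{10.1093/imaiai/iay005} survives only because there all $m$ vectors are independent. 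The paper's proof of \eqref{eq:unitary3} avoids any subset-uniform operator-norm statement: it again uses the Gaussian coupling and the event \eqref{eq:event} to bound $\sum_{i\in\calS(\bv,\beta)}|\ba_i^*\bv|^2$ directly for each fixed $\bv$ with failure probability $\exp(-Cm)$, and then combines the $\epsilon$-net with the deterministic perturbation Lemma~\ref{lemma:term31}. So while your decomposition of $L$ and several structural observations match the paper, the two concentration steps that carry all the difficulty are not established by your argument, and the coupling idea that the paper uses to establish them is absent from your plan.
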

We will control each of the three expressions in \eqref{eq:unitary1} separately. For the component $\frac{6}{\alpha\!-\!1}\!\!\sum_{i=1}^m|\ba_i^*\bv|^2$, by definition it is equivalent to $\frac{6}{\alpha-1}K=\frac{6}{\alpha-1}\frac{m}{n}$. Assuming that we have that for some $c_4>0$,
\begin{equation}\label{eq:unitary2}
\Pr\left\{\min_{\|\bv\|=1}\!\!\frac{(\ba_i^*\bz\bv^*\!\ba_i\!\!+\!\!\ba_i^*\bv\bz^*\!\ba_i\!)^2}{2|\ba_i^*\bz|^2}\leq \frac{c_4}{2}\frac{m}{n}\!\!\right\}\geq 1-C\exp(-Cm+n\log n).
\end{equation}
and that there exists $c_8>0$ such that for any $\beta$,
\begin{equation}\label{eq:unitary3}
\Pr\left\{\max_{\|\bv\|=1}\sum_{i\in\calS(\bv,\beta)}\!\!|\ba_i^*\bv|^2\leq c_8 \beta \frac{m}{n}\!\!\right\}\geq 1-C\exp(-Cm+n\log n),
\end{equation}
then we can choose $\alpha$ and $c_0$ such that $2c_0\alpha<1$ and $L>c$ for some $c>0$ with probability $1-Cn\exp(-Cn)$. That is, Theorem~\ref{thm:step4} is proved.

\subsection{Proof of \eqref{eq:unitary2}}
 For the component $\min_{\|\bv\|=1}\sum_{i=1}^m\!\frac{(\ba_i^*\bz\bv^*\!\ba_i+\ba_i^*\bv\bz^*\!\ba_i\!)^2}{2|\ba_i^*\bz|^2}$, we will use a two-step procedure: First, we will show that for any fixed $\bv\in\bbC^n$ with $\|\bv\|=1$, there exists some $c_4>0$ such that
\begin{equation}\label{eq:singlev}
\Pr\left\{\!\!\frac{(\ba_i^*\bz\bv^*\!\ba_i\!\!+\!\!\ba_i^*\bv\bz^*\!\ba_i\!)^2}{2|\ba_i^*\bz|^2}\leq c_4\frac{m}{n}\!\!\right\}\geq 1-C\exp(-Cm).\end{equation} Second, we will apply an $\epsilon$-net argument to the set $\{\bv\in\bbC^n: \|\bv\|=1\}$.

To prove \eqref{eq:singlev}, considering that the expression only depends on the inner products $\ba_i^*\bv$ and $\ba_i^*\bz$, it is equivalent to work with $\tilde{\ba}_i=P_{\Sp(\bv,\bz)}\ba_i$, $\tilde{\bv}=P_{\Sp(\bv,\bz)}\bv$, and $\tilde{\bz}=P_{\Sp(\bv,\bz)}\bz$, that is, $\tilde{\ba}_i, \tilde{\bv}, \tilde{\bz}\in\bbC^2$ are obtained by projecting these vectors to the subspace spanned by $\bv$ and $\bz$. Then for any $1\leq k\leq K$, $[\tilde{\ba}_{(k-1)n+1},\cdots,\tilde{\ba}_{kn}]\in\bbC^{n\times 2}$ is a random orthogonal matrix in $\bbC^{n\times 2}$. While $\{\tilde{\ba}_i\}_{i=1}^m$ are not independently distributed, their correlation is weak and can be decomposed as follows: For $1\leq k\leq K$, let $\bS_k$ be a random matrix that represents the covariance of a Gaussian matrix of $n\times 2$, then $\{\bb_i\}_{i=1}^m$ defined by $\bb_i=\tilde{\ba}_i\sqrt{\bS_{\lfloor (i-1)/n\rfloor+1}}$ are i.i.d. sampled from $N(0,\bI_{2\times 2})$. Let $t$ be a parameter such that
\begin{equation}\label{eq:event}
\Pr\{\max_{1\leq k\leq K}\|\sqrt{\bS_{k}}/\sqrt{n}-\bI\|\leq t\}\geq 1/2.
\end{equation}
By the property of the singular values of a random Gaussian matrix \cite[Theorem 2.13]{Szarek2001} and a union bound over $K=\frac{m}{n}$ matrices $\{\bS_k\}_{k=1}^K$, we may let $t=\frac{c\log \frac{m}{n}}{\sqrt{n}}$, which converges to zero as $n,m\rightarrow\infty$ since  $\sqrt{n}>\log^2 m$. We remark that
\begin{equation}\label{eq:correlation}
\text{when the event in \eqref{eq:event} holds, $\|\tilde{\ba}_i-\frac{1}{\sqrt{n}}\bb_i\|\leq t\|\tilde{\ba}_i\|$ for all $1\leq i\leq m$.}
\end{equation}

Let $\calI_1=\{1\leq i\leq m: \|\tilde{\ba}_i^*\|\geq c_5/\sqrt{n}, |\tilde{\ba}_i^*\bz|\geq c_6\|\tilde{\ba}_i^*\|, |\tilde{\ba}_i^*\bv|\geq c_6\|\tilde{\ba}_i^*\|, \frac{\mathrm{Re}(\tilde{\ba}_i^*\tilde{\bz}\tilde{\bv}^*\!\tilde{\ba}_i)}{\|\tilde{\bv}^*\!\tilde{\ba}_i\|\|\tilde{\bz}^*\!\tilde{\ba}_i\|}\geq c_7\}$, and let $\calI_2=\{1\leq i\leq m: \|{\bb}_i^*\|\geq c_5(1-t),  |{\bb}_i^*\bz|\geq (c_6-t)\|{\bb}_i^*\|, |{\bb}_i^*\bv|\geq (c_6-t)\|{\bb}_i^*\|, \frac{\mathrm{Re}({\bb}_i^*\tilde{\bz}\tilde{\bv}^*\!{\bb}_i)}{\|\tilde{\bv}^*\!{\bb}_i\|\|\tilde{\bz}^*\!{\bb}_i\|}\geq c_7-3t/c_6\}$. By definition, when the event in \eqref{eq:event} holds, then
${\calI_1}\subseteq  \calI_2$ and $\bar{\calI_2}\supseteq \bar{\calI_1}$. Since the event in \eqref{eq:event} is independent of $\{\tilde{\ba}_i\}_{i=1}^m$, for any $\alpha>0$,
\begin{align*}
&\Pr\{|\bar{\calI}_2|\geq \alpha m\}\geq \Pr\{|\bar{\calI}_1|\geq \alpha m\}\Pr\{\max_{1\leq k\leq K}\|\bS_k-n\bI\|/n\leq t\}\\\geq& \frac{1}{2}\Pr\{|\bar{\calI}_1|\geq \alpha m\}.
\end{align*}
On the other hand, clearly we can choose $c_5, c_6, c_7, 0<\alpha<1$ such that
\[
\Pr\{|\bar{\calI}_2|\geq \alpha m\}=\Pr\{|{\calI}_2|\leq (1-\alpha)m\}\leq C\exp(-Cm),
\]
and it suggests that
\[
\Pr\{|{\calI}_1|\leq (1-\alpha)m\}=\Pr\{|\bar{\calI}_1|\geq \alpha m\}\leq 2\Pr\{|\bar{\calI}_2|\geq \alpha m\}\leq C\exp(-Cm).
\]

As a result, with probability $1-C\exp(-Cm)$, $|{\calI}_1|\geq (1-\alpha)m$ and as a result,
\[
\sum_{i=1}^m\!\frac{(\tilde{\ba}_i^*\tilde{\bz}\tilde{\bv}^*\!\tilde{\ba}_i\!\!+\!\!\tilde{\ba}_i^*\tilde{\bv}\tilde{\bz}^*\!\tilde{\ba}_i\!)^2}{2|\tilde{\ba}_i^*\bz|^2}\geq (1-\alpha)c_5^2c_6^2c_7^2\frac{m}{n},
\]
and \eqref{eq:singlev} is proved with $c_4=(1-\alpha)c_5^2c_6^2c_7^2$.

It remains to apply an $\epsilon$-net argument, which is a standard argument: by noting that
\[
\sum_{i=1}^m\Big|\frac{(\ba_i^*\bz\bv^*\!\ba_i\!\!+\!\!\ba_i^*\bv\bz^*\!\ba_i\!)^2}{2|\ba_i^*\bz|^2}-\frac{(\ba_i^*\bz\bv'^*\!\ba_i\!\!+\!\!\ba_i^*\bv'\bz^*\!\ba_i\!)^2}{2|\ba_i^*\bz|^2}\Big|\leq \sum_{i=1}^m\|\ba_i\|^2\|\bv-\bv'\|^2=m\|\bv-\bv'\|^2
\]
and use $\epsilon=\frac{c_4}{2\sqrt{n}}$, we have \eqref{eq:unitary2}.

\subsection{Proof of \eqref{eq:unitary3}}
The proof of \eqref{eq:unitary3} will be similar to the proof of \eqref{eq:unitary2}. First, it will be proved for a fixed $\bv$, and then an $\epsilon$-net argument will be used. In the proof of \eqref{eq:unitary3} for any fixed $\bv$, we apply $\tilde{\ba}_i$ and $\bb_i$ as defined in the proof of \eqref{eq:unitary2}.

For any fixed $\bv$, the first part of the proof of Lemma~\ref{lemma:term34} implies that it is sufficient to assume $\bv\perp\bz$ and $\tilde{\bv}\perp\tilde{\bz}$. When $i\in\calS(\bv,\beta)$, then $\beta\|\tilde{\ba}_i^*\tilde{\bv}\|\geq \|\tilde{\ba}_i^*\tilde{\bz}\|$. When the event \eqref{eq:event} holds, then \eqref{eq:correlation} suggests that $\frac{\|\bb_i^*\tilde{\bv}\|}{\|\bb_i^*\tilde{\bz}\|}\geq \frac{1-t}{\beta+t}$, which happens with probability smaller than $(\frac{\beta+t}{1-t})^2$ by Lemma~\ref{lemma:term34}.

Applying Hoeffding's inequality for subgaussian distributions, there exists $\gamma_0>0$ such that with probability $C\exp(-Cm)$, $\sum_{1\leq i\leq m: \frac{\|\bb_i^*\tilde{\bv}\|}{\|\bb_i^*\tilde{\bz}\|}\geq \frac{1-t}{\beta+t}}|\bb_i^*\tilde{\bv}|^2\geq (\frac{\beta+t}{1-t})^2\gamma_0 m$. By the same argument as the proof of \eqref{eq:singlev} and $|\bb_i^*\tilde{\bv}|^2-|\tilde{\ba}_i^*\tilde{\bv}|^2\leq t^2\|\tilde{\ba}_i\|^2$, we have that with probability at most $2C\exp(-Cm)$,
\[
\sum_{i\in\calS(\bv,\beta)}|\tilde{\ba}_i^*\bv|^2\geq (\frac{\beta+t}{1-t})^2\gamma_0 \frac{m}{n}+t^2\frac{m}{n}.
\]

Combining it with an $\epsilon$-net argument with $\epsilon=c/n$ and Lemma~\ref{lemma:term31} (with $c_2=n/2c$), note that
\[
\{1\leq i\leq m: \|\ba_i\|/|\ba_i^*\bv|\geq c_2\}\|\ba_i^*\bv\|^2\leq \{1\leq i\leq m: \|\ba_i\|/|\ba_i^*\bv|\geq c_2\}\|\ba_i\|^2/c_2^2\leq m/c_2^2,
\]
\eqref{eq:unitary3} is proved.

\section{Discussion}
\subsection{Comparison with existing analysis of real-valued objects}
This section compares the analysis in this case with the analysis of the same algorithm of real-valued objects in Tan and Vershynin \cite{10.1093/imaiai/iay005}, since both works have deterministic conditions of convergence and verify the deterministic condition under a probabilistic model.

First, the deterministic condition in \cite{10.1093/imaiai/iay005} can be rewritten as follows: there exist $\theta$ such that for all ``wedges of angle $\theta$'' $\calW$ in $\reals^n$,
\begin{equation}\label{eq:compare1}
\frac{1}{m}\lambda_{\min}\Big(\sum_{i=1}^m\ba_i\ba_i^T-4\sum_{\ba_i\in \calW}\ba_i\ba_i^T\Big) \geq \frac{c}{n},
\end{equation}
and here wedge of angle $\theta$ represents the region of the sphere between two hemispheres with normal vectors making an angle of $\theta$.

In comparison,  combining Theorems~\ref{thm:main1} and~\ref{thm:main2}, the deterministic result in this paper  requires the existence of some $c_0>0$ and $\alpha>1$ such that
\begin{equation}\label{eq:compare2}
  \min_{\|\bv\|=1} \!\left\{\!\frac{1}{2}\!\sum_{i=1}^m\!\frac{(\ba_i^*\bz\bv^*\!\ba_i\!+\!\ba_i^*\bv\bz^*\!\ba_i)^2}{2|\ba_i^*\bz|^2}\!-\! \frac{6}{\alpha\!-\!1}\!\!\sum_{i=1}^m|\ba_i^*\bv|^2\!-\!(2\!+\!4\alpha\!)\!\!\!\!\!\!\sum_{i\in\calS(\bv,c_0\alpha)}\!\!\!\!\!\!|\ba_i^*\bv|^2\!\right\}\!\geq \! \frac{c}{n}.
\end{equation}

The term $\sum_{i=1}^m\ba_i\ba_i^T$ in \eqref{eq:compare1} is comparable to the terms $\frac{1}{2}\sum_{i=1}^m\frac{(\ba_i^*\bz\bv^*\ba_i+\ba_i^*\bv\bz^*\ba_i)^2}{2|\ba_i^*\bz|^2}- \frac{6}{\alpha-1}\sum_{i=1}^m|\ba_i^*\bv|^2$ in \eqref{eq:compare2}. Under the real-valued setting, the latter can be simplified to $\left(1-\frac{6}{\alpha-1}\right)\sum_{i=1}^m|\ba_i^*\bv|^2$, and minimizing it over all $\|\bv\|=1$ gives the smallest eigenvalue of $\sum_{i=1}^m\ba_i\ba_i^T$.

The term $\sum_{\ba_i\in \calW}\ba_i\ba_i^T$ in \eqref{eq:compare1} and the set $\calW$ are also comparable to the term $(2+4\alpha)\sum_{i\in\calS(\bv,c_0\alpha)}|\ba_i^*\bv|^2$ in \eqref{eq:compare2} and the set $\calS(\bv,c_0\alpha)$.
In fact, the set $\calS(\bv,c_0\alpha)$ also has the ``wedge'' shape under the real-valued setting, and both works attempt to show that the number of sensing vectors in the set is small.

Second, the probabilistic analysis in these two works also shares connections, which is natural since there are similarities in the deterministic conditions. However, \cite{10.1093/imaiai/iay005} achieves the bound $m=O(n)$, which is a logarithmic factor better than the bound $m=O(n\log n)$ in Theorem~\ref{thm:main}. Looking into the analysis of both works, the extra $\log n$ factor comes from the estimation of $\sum_{i=1}^m\frac{(\ba_i^*\bz\bv^*\ba_i+\ba_i^*\bv\bz^*\ba_i)^2}{2|\ba_i^*\bz|^2}$ in Lemma~\ref{lemma:term1} and the estimation of the size of $\calS(\bv,c_0\alpha)$ in Lemma~\ref{lemma:term3}, where simple $\epsilon$-net arguments are used. In comparison, in the real-valued setting \cite{10.1093/imaiai/iay005},  $\sum_{i=1}^m\frac{(\ba_i^*\bz\bv^*\ba_i+\ba_i^*\bv\bz^*\ba_i)^2}{2|\ba_i^*\bz|^2}=2\sum_{i=1}^m\|\ba_i^*\bv\|^2$ and a standard result on the eigenvalue of $\sum_{i=1}^m\ba_i\ba_i^*$ can be used; and the number of the sensing vectors in the set $\calW$ is uniformly bounded by applying VC theory, and these arguments do not have natural generalizations to the complex-valued setting. In comparison, the $\epsilon$-net argument gives an additional $\log n$ factor. It would be interesting to investigate whether there exist more careful arguments for the complex-valued setting such that the $\log n$ factor could be removed.

Finally, while there are similarities between this work and \cite{10.1093/imaiai/iay005}, the fundamental difference comes from the argument for the deterministic condition in Theorems~\ref{thm:main1} and~\ref{thm:main2}, which relates the convergence of the randomized Kaczmarz algorithm with the local convexity of an objective function. In comparison, the straightforward calculation in \cite{10.1093/imaiai/iay005} is based on the fact that there only exist  two phases of $\pm 1$ in the real-valued setting.

\subsection{Initialization}
Theorem~\ref{thm:main} requires an initialization such that $\|\bx^{(0)}-\bz\|\leq c_0\sqrt{\delta_1}$. Many schemes have been proposed for obtaining a good initialization \cite[Section B]{10.1093/imaiai/iay005}. For example, we may use the truncated spectral method that let $\hat{\bx}^{(0)}=\lambda_0\tilde{\bx}$, where $\lambda_0=\sqrt{\frac{1}{m}\sum_{i=1}^m b_i^2}$ and $\tilde{\bx}$ in the leading eigenvector of $
Y=\frac{1}{m}\sum_{i=1}^mb_i^2\ba_i\ba_i^* I(b_i\leq 3\lambda_0).$ Following the analysis in \cite[Section B]{10.1093/imaiai/iay005}, one can show that the requirement on the initialization $\|\hat{\bx}^{(0)}-\bz\|\leq c_0\sqrt{\delta_1} \|\bz\|$ holds as long as $m\geq C(\log(1/\delta)+n)/( c_0^2\delta_1)$.

However, considering that this construction of the initialization is  dependent on the isotropy of the distribution of the sensing vectors, it is still interesting to investigate whether the randomized Kaczmarz algorithm works with random initialization, similar to the results in \cite{tan2019online,zhang2020}, and we leave it as a possible future direction.

\subsection{Other probabilistic models}
Sections~\ref{sec:step3} and~\ref{sec:step4} verifies the deterministic condition in Theorems~\ref{thm:main1} and~\ref{thm:main2} when the sensing vectors are sampled i.i.d. from a uniform distribution on the sphere or generated from a unitary model. However, there exist many other models of generating sensing vectors, such as  coded diffraction model in \cite{Wei_2015}, and it would be interesting to see whether the deterministic condition in Theorems~\ref{thm:main1} and~\ref{thm:main2} could be verified for more generic models. 

\section{Summary}
This paper justifies the convergence of the randomized Kaczmarz algorithm for phase retrieval of complex-valued objects. Specifically, the paper first establishes a deterministic condition for its convergence, and then demonstrates that  when the sensing vectors are sampled uniformly from a unit sphere in $\bbC^n$ and the number of sensing vectors $m$ satisfies $m>O(n\log n)$ as $n,m\rightarrow\infty$, then this deterministic condition holds with high probability. 


%
%

\bibliographystyle{spmpsci}      
\bibliography{bib-online}   


\end{document}